\def\E{{\mathrm{E}}}
\def \R{{{\rm I{\!}\rm R}}}
\def \N{{\mathbb{N}}}
\def \Z{{\mathbb{Z}}}
\def\E{{\mathrm{E}}}
\def \R{{{\rm I{\!}\rm R}}}
\def \N{{\mathbb{N}}}
\def \Z{{\mathbb{Z}}}
\begin{document}

\title{The modified Yule-Walker method for  multidimensional infinite-variance periodic autoregressive model of order 1}


\author{Prashant Giri \and
Aleksandra Grzesiek \and
Wojciech \.Zu\l awi\'nski \and
       S. Sundar \and
       Agnieszka Wy{\l}oma{\'n}ska
}


\institute{Prashant Giri and  S. Sundar \at
            Department of Mathematics\\ Indian Institute of Technology Madras\\
	        600036 Chennai, India \\
             \email{prashantnet2013@gmail.com},
             \email{slnt@iitm.ac.in}\\
           \and
           Aleksandra Grzesiek, Wojciech \.Zu\l awi\'nski (Corresponding Author) and Agnieszka Wy{\l}oma{\'n}ska,   \at
            Faculty of Pure and Applied Mathematics, Hugo Steinhaus Center\\ Wroc{\l}aw University of Science and  Technology\\
           Wybrze\.ze Wyspia{\'n}skiego 27\\ 50-370 Wroc{\l}aw, Poland \\        
	       \email{aleksandra.grzesiek@pwr.edu.pl}
	       \email{ wojciech.zulawinski@pwr.edu.pl\ (Corresponding Author)}
	       \email{agnieszka.wylomanska@pwr.edu.pl}}

\date{}

\maketitle

\begin{abstract}
The time series with periodic behavior, such as the periodic autoregressive (PAR) models belonging to the class of the periodically correlated processes, are present in various real applications. In the literature, such processes were considered in different directions, especially with the Gaussian-distributed noise. However, in most of the applications, the assumption of the finite-variance distribution seems to be too simplified. Thus, one can consider the extensions of the classical PAR model where the non-Gaussian distribution is applied. In particular, the Gaussian distribution can be replaced by the infinite-variance distribution, e.g. by the $\alpha-$stable distribution. In this paper, we focus on the multidimensional $\alpha-$stable PAR time series models. For such models, we propose a new estimation method based on the Yule-Walker equations. However, since for the infinite-variance case the covariance does not exist, thus it is replaced by another measure, namely the covariation. In this paper we propose to apply two estimators of the covariation measure. The first one is based on moment representation (moment-based) while the second one  - on the spectral measure representation (spectral-based).
The validity of the new approaches are verified using the Monte Carlo simulations in different contexts, including the sample size and the index of stability of the noise. Moreover, we compare the moment-based covariation-based method with spectral-based covariation-based technique. Finally, the real data analysis is presented.
	
	\keywords{periodic autoregression \and heavy-tailed distribution \and covariation \and estimation \and Monte Carlo simulation  }
	
    
    \subclass{92C50 \and 62P10}
\end{abstract}

\section{Introduction}
In many real applications, one can observe the periodic behavior of the data. Very often the periodicity is not visible in the raw time series but in its characteristics, like in the sample autocovariance function. In that case, the model that can be useful to the data description belongs to the class of periodically correlated processes (called also cyclostationary processes).
The periodically correlated processes are useful  in various real applications, including mechanical systems \citep{mech1,antoni2004cyclostationary}, hydrology \citep{hyd1,hyd2}, climatology and meteorology \citep{met1,met2}, economics \citep{broszkiewicz2004detecting,econ2}, medicine and biology \citep{medicine1,medicine2} and many others. The idea of periodically correlated processes was initiated in \citep{Guzdenko_1959_Antonio,gladyshev}, and then extensively extended by many authors.  One of the most known members of the periodically correlated discrete-time models is the periodic autoregressive moving average (PARMA) time series, \citep{aaw8,aaw10}. The one-dimensional PARMA models were examined in many statistical papers \citep{aaw7,aaw17,aaw16,aaw15,aaw14,aaw11,aaw12,aaw13}. The PARMA models are considered as the natural extension of the classical autoregressive moving average (ARMA) time series \citep{brockwelldavis}, where instead of the constant-coefficient the periodic parameters are used. It is worth mentioning, the PARMA models can be also treated as the special case of the time-dependent coefficients ARMA time series \citep{aaw29,aaw28,aaw30}. In the classical version, the PARMA models are based on the Gaussian distribution of the noise.

Although the theory of the finite-variance PARMA models is still extended, the assumption of the finite second moment of the given process (mostly Gaussian distributed) is inappropriate for many real applications. Thus, many theoretical models with non-Gaussian distribution were considered in different applications e.g. finance \citep{fin1}, physics \citep{phys}, electricity market \citep{el}, technical diagnostics \citep{zak1,zak3,new1}, geophysical science \citep{geo1,geo2}, and many others. One can also find the research studies related to PARMA models with the infinite-variance distribution of the noise \citep{kruczek2020detect,kruczek_physica,nowicka_wylomanska}. 
However, when we analyze the time series models without the assumption of the finite second moment of the distribution, the classical methods to the parameters' estimation and statistical investigation can not be directly applied. The main problem in such analysis results from the infinite theoretical autocovariance function that is a base for many statistical methods. Thus, dedicated algorithms need to be introduced. In the literature, one can find the algorithms for the estimation of the parameters for the infinite-variance constant-coefficient ARMA \citep{est2,kruczek_physica} as well as PARMA \citep{kruczek_physica} time series. In many of the cases, the noise is described by the $\alpha-$stable distribution that is the most known member of the infinite-variance class of distributions  \citep{Taqqu}. 

Similar to the multidimensional ARMA models (also called vector ARMA, VARMA) known in the literature \citep{aaw25,aaw24,aaw26,aaw27}, one may also consider the multivariate version of the PARMA models. They can be also treated as an example of the multivariate ARMA models with time-varying coefficients \citep{aaw27_1,aaw27_2,aaw27_3}. In the literature, one may find the research papers devoted to the multidimensional PAR (or PARMA) models \citep{aaw20,aaw22,aaw19,aaw21,aaw23}. However, similar as in one-dimensional case, also in the multivariate one, the mentioned above multivariate PARMA models are based mostly on the multidimensional Gaussian distribution. However, when we analyze the real multivariate data, this assumption seems to be too simple to model different phenomena. Thus, similar to the one-dimensional case, one can consider the multidimensional PARMA models with infinite-variance multidimensional distribution which seems to be more proper for real applications. Very often in real examples, two or even more components of the same process are examined at the same time and the dependence between them is crucial to model the data accurately. What more, when the external sources appear, then the non-Gaussian (impulsive) behavior of the components is observed. The perfect example is the energy market, where the obvious periodicity of the data exists, and the components such as energy price and energy demand are related. In a one-dimensional case, the energy data were modeled by using the PARMA time series \citep{broszkiewicz2004detecting} also by using the non-Gaussian $\alpha-$stable distribution \citep{kruczek_physica}. In our recently published paper, we start the research related to the theoretical properties of the multidimensional PAR models with multivariate $\alpha-$stable noise \citep{our_2dim_par}. 

However, when we analyze the PAR models with the assumption of infinite-variance distribution, one can consider the estimation methods that take into consideration the non-Gaussian behavior. In a one-dimensional case the appropriate methods were proposed for stationary AR models \citep{zul,est2,kruczek_physica} and also for PAR time series \citep{kruczek_physica}. In this paper, we extend the research and propose a new estimation method for the multivariate PAR models with infinite-variance multidimensional noise. Here we concentrate on the estimation method which extends the classical Yule-Walker algorithm widely used in the Gaussian case for VAR and PAR time series \citep{brockwelldavis}. However, in the finite-variance case, the Yule-Walker algorithm is based on the autocovariance and cross-covariance function of a given time series. In the infinite-variance case, these functions are infinite, thus, similar to  one-dimensional case \citep{est2,kruczek_physica}, here we replace the classical measure by the alternative one adequate for infinite-variance processes. We use the auto-covariation and cross-covariation functions that are properly defined for $\alpha-$stable models \citep{KokTaq97}. The auto-covariation and cross-covariation functions (as well as other alternative dependency measures) were analyzed in our recent papers in different directions, see for instance \citep{our_2dim_par,sundar1,nasza,nasza2}. The main attention we pay to the PAR(1) models with multidimensional  $\alpha-$stable noise. The validity of the new method is presented for the simulated data and for real time series.

The paper is organized in the following way. In Section \ref{stable} we recall the multidimensional $\alpha$-stable distribution together with the covariation measure used to quantify the dependence within the random vector. In Section 3 we present the definition of a multidimensional periodic autoregressive model based on the $\alpha-$stable distribution, the form of the bounded solution for multidimensional PAR(1) time series together with the formulas for the corresponding auto- and cross-covariation functions. In Section 4 we introduce  new covariation-based estimation procedure for the parameters of multidimensional $\alpha-$stable PAR(1) time series model. Section 5 contains the simulation study. {In Section \ref{realdata} we present the real data analysis.} Section \ref{Conclusions} concludes the paper.

\section{The multidimensional symmetric $\alpha-$stable distribution} \label{stable}

The definition of a general multidimensional $\alpha-$stable random vector $\textbf{Z}=(Z_1,Z_2,\ldots,Z_m)$ can be given using the characteristic function \citep{miller,cova1,Taqqu}, which in a special case of the symmetric vectors ($S\alpha S$), taken under consideration in this paper, takes the following form
\begin{equation} \label{fun_char}
\Phi_{\alpha}(\boldsymbol{\theta})=\exp\left\{-\int_{S_m}|\langle\boldsymbol{\theta},\textbf{s}\rangle|^{\alpha}\Gamma(\mathrm{d}s)\right\}.
\end{equation}
where $0<\alpha\leq2$ is a stability parameter (or stability index), $\Gamma(\cdot)$ is a finite symmetric spectral measure on the unit sphere $S_m$ of ${\R}^m$, and $\langle\cdot,\cdot\rangle$ denotes the scalar product. The measure $\Gamma(\cdot)$ in Eq. (\ref{fun_char}) is called the spectral measure of the random vector $\textbf{Z}$ and uniquely defines the distribution. Moreover, the spectral measure defines the dependence structure between the $\alpha-$stable vector components. For $S\alpha S$ random vector $\textbf{Z}$ we introduce the following notation 
\begin{equation}
    \mathbf{Z} \sim S_\alpha(\Gamma).
\end{equation} 
More properties of the one-dimensional and multidimensional $\alpha-$stable distribution the readers can find for instance in \citep{Taqqu,miller,cova1,zolotarev}.

\subsection{Dependence measure for $\alpha-$stable random variables}

To describe the dependence structure of $S\alpha S$ random vectors one can use the measure called covariation. For a $S\alpha S$ random vector $(X,Y)$ with the stability index $1< \alpha \leqslant 2$, the covariation of $X$ on $Y$ is defined as an intergral with respect to the spectral measure $\Gamma(\cdot)$ of $(X,Y)$, namely it is a real number determined as follows
\begin{equation} \label{CV-def}
	CV(X,Y)=\int_{S_2}s_1s_2^{\langle\alpha-1\rangle}\Gamma(\mathrm{d}s),
\end{equation}
where $x^{\langle a\rangle}$ is a so-called signed power, i.e., $x^{\langle a\rangle}=|x|^a sign(x).$ Alternatively, for all $1\leq q<\alpha$ the covariation can be defined as follows \citep{Taqqu}
\begin{equation}\label{CV}
	CV(X,Y)=\frac{E[XY^{\langle q-1\rangle}]\sigma^{\alpha}_{Y}}{E|Y|^q},
\end{equation}
where $\sigma_{Y}$ is the scale parameter of $Y$, i.e., $\sigma_Y = \left(\int_{S_2}|s_2|^{\alpha}\Gamma(\mathrm{d}s)\right)^{\frac{1}{\alpha}}$. In \citep{Taqqu} the authors proved that covariation defined in Eq. (\ref{CV}) is a value independent on $q$.

The covariation defined above is additive and linear in the first argument, i.e. for the symmetric $\alpha-$stable random variables $X_1, X_2, Y$ and the real numbers $a_1 , a_2$ we have \citep{Taqqu}
\begin{equation*}
CV(a_1 X_1 + a_2 X_2, Y) = a_1 CV(X_1,Y) + a_2 CV(X_2,Y).
\end{equation*}
However, the additivity property in the second argument, i.e. $CV(Y, X_1 + X_2) = CV(Y, X_1) + CV(Y, X_2)$, holds only if $X_1$ and $X_2$ are independent \citep{Taqqu}. Moreover, for the covariation the following scaling property holds \citep{Taqqu}
\begin{equation*}
CV(a_1 X,a_2 Y)= a_1 a_2^{\left \langle \alpha -1 \right \rangle}CV(X,Y).
\end{equation*}
Another important property is the fact that for independent random variables $X$ and $Y$ the covariation reduces to zero, namely $CV(X,Y) = 0$, but the implication in the opposite direction is not true \citep{Taqqu}. Moreover, for $\alpha = 2$ (Gaussian case), the covariation is proportional to the classical second-moment-based covariance measure, i.e. $2 CV(X,Y) = Cov(X,Y)$. It is important to mention that in general the measure is non-symmetric in the arguments, i.e. $CV(X,Y) \neq CV(Y,X)$. Furthermore, for $1<\alpha<2$ the measure given in Eq. (\ref{CV-def}) determines a covariation norm on the linear space of jointly symmetric $\alpha-$stable random variables denoted as follows \citep{Taqqu}
\begin{equation} \label{norm}
\Vert X \Vert_\alpha =  (CV(X,X))^{1/\alpha}.
\end{equation}

The covariation can be also used to measure the interdependence (auto-dependence) within a one-dimensional time series $\{X(t)\}$, $t\in \Z$. In this case, the function is called the auto-covariation and from Eq. (\ref{CV}) it can be expressed in the following form
\begin{equation}\label{auto-CV}
	CV(X(t),X(s))=\frac{E[X(t)X(s)^{\langle q-1\rangle}]\sigma^{\alpha}_{X(s)}}{E|X(s)|^q},
\end{equation}
where $t,s \in \mathbb{Z}$. Moreover, the formula for the auto-covariation can be generalized to the case of the multivariate time series $\{\mathbf{X}(t)\}=\{X_1(t),\ldots,X_m(t)\}$, $t\in \Z$ for which we additionally observe the dependence between the component time series $\{X_i(t)\}$ and $\{X_j(t)\}$ for $i\neq j$. To describe that kind of dependence (cross-dependence) one can use the cross-covariation given, similarly as in the previous case, by the following formula \citep{nasza2,our_2dim_par,nasza_asympt}
\begin{equation}\label{cross-CV}
	CV(X_i(t),X_j(s)=\frac{E[X_i(t)X_j(s)^{\langle q-1\rangle}]\sigma^{\alpha}_{X_j(s)}}{E|X_j(s)|^q},
\end{equation}
where $t,s \in \mathbb{Z}$ and $j \neq i$.

\subsection{Estimation of covariation} 
\subsubsection{Estimation of covariation based on moments} \label{est_cov}
In this subsection, we present how to estimate the auto-covariation and the cross-covariation functions for the stationary $\alpha-$stable processes. In practice, instead of estimating the exact measures given in Eqs. (\ref{auto-CV}-\ref{cross-CV}) one often estimates their normalized versions, i.e. the covariation divided by the scale parameter of the second argument raised to the power of $\alpha$ with $q=1$, see \citep{est2}.

Let $\{x(t)\}$ denote a trajectory of stationary one-dimensional time series, where $t = 1,\ldots,L$ and $L$ is the trajectory length. The estimator of normalized auto-covariation takes the following form
\begin{equation} \label{NCV1}
 \widehat{NCV}(h)  = \widehat{NCV}(X(t),X(t-h)) = \frac{\sum_{t=r}^{l}x(t)sign(x(t-h))}{\sum_{t=r}^{L}\left | x(t) \right |},
 \end{equation}
where $r = max(1,1+h)$ and $l = min(L,L+h)$. For the multidimensional time series   with a trajectory $\{\textbf{x}(t)\}=\{(x_1(t),\ldots,x_m(t))\}$, where $t = 1,\ldots,L$ and $L$ is the trajectory length, we estimate the normalized cross-covariation function using the following formula
\begin{equation} \label{NCV_par}
 \widehat{NCV}_{i,j}(h)  = \widehat{NCV}(X_i(t),X_j(t-h))= \frac{\sum_{t=r}^{l}x_i(t)sign(x_j(t-h))}{\sum_{t=r}^{L}\left | x_j(t) \right |}
 \end{equation}
where $r = max(1,1+h), l = min(L,L+h)$ and $i, j = 1,\ldots,m$. Let us notice that for $i=j$ the estimator given in Eq. (\ref{NCV_par}) simplifies to this presented in Eq. (\ref{NCV1}).

\subsubsection{Estimation of covariation based on spectral measure} \label{est_cov1}

As an alternative to the methodology presented above, in order to estimate the auto-covariation and the cross-covariation, we can directly apply the formula presented in Eq. (\ref{CV-def}). Namely, following the approach proposed by \citep{physica}, the covariation of jointly $S\alpha S$ random vector $(X_1,X_2)$ with spectral measure $\Gamma(\cdot)$ can be estimated as follows
\begin{equation} \label{theo_est_CV}
\widehat{CV}(X_1,X_2) = \sum_{j=1}^{L} s_{1,j}(s_{2,j})^{\langle \alpha-1\rangle} \Gamma^{*}(\textbf{s}_j).
\end{equation} 
In the above-given formula, $\Gamma^{*}(\cdot)$ is a discrete approximation of the spectral measure $\Gamma(\cdot)$ and $\textbf{s}_j = (s_{1,j},s_{2,j})$ denotes the location of spectral measure's masses on unit sphere in $\R^2$. 

It is important to emphasize that an arbitrary spectral measure $\Gamma(\cdot)$ can be approximated by a discrete measure $\Gamma^*(\cdot)$ in such a way that the densities corresponding to both random vectors are uniformly close. For the details of the construction of such approximation, including the number, locations and weights of the point masses, we refer the readers to the results given in \citep{theor_CV_5}.

As one can notice, calculating the covariation using Eq. (\ref{theo_est_CV}) requires the estimation of a discrete spectral measure $\Gamma^{*}$. In the literature, there are variety of methods proposed for this purpose \citep{Cheng_Rachev,PIVATO2003219,OGATA2013248,Mohammad,Sathe}. In the following part of the paper, we decide to apply the widely-used approach introduced by \citep{NOLAN_Panorska_McCulloch}. The chosen estimation procedure is known as the projection method since it is based on one-dimensional projections of the considered random vector. In the two-dimensional case the method was consider also by \citep{NOLAN_Panorska_McCulloch_2}.

{Since the time series considered in our paper is non-stationary, to estimate the auto- or cross-covariation functions of the model we cannot directly apply the estimator mentioned above. However, the estimators used in the procedure introduced in Section \ref{est_method} are based on the ones given above since the proposed method involves estimating the auto- and cross-covariation functions corresponding to a number of stationary sub-samples driven from a trajectory of non-stationary process.}

\section{Periodic autoregressive model with multidimensional $\alpha-$stable noise}
In this section, we consider the periodic autoregressive (PAR) model with $S\alpha S$ distribution. We start our consideration by recalling the general definition of a one-dimensional and multidimensional $\alpha-$stable PAR model of order $p\in{\N}$ (denoted as PAR(p)). Then, we focus on the time series examined in this paper, namely the $m$-dimensional PAR model of order $1$.{
\begin{definition} {\citep{kruczek_physica}}  \label{PAR_def}
 (\textbf{One-dimensional PAR(p) model}) A time series $\{ X(t) \}$, $t\in\Z$ is called the one-dimensional $\alpha-$stable periodic autoregressive model of order $p$ with a period $T\in \N$ if for each $t\in{\Z}$ it satisfies the following equation
	\begin{equation}\label{par_p_1}
	X(t)=\theta_1(t)X(t-1)+\ldots+\theta_p(t)X(t-p)+ Z(t),
	\end{equation}
where $\{Z(t)\}$, $t\in\Z$ is a sequence of independent $S\alpha S$ random variables which characteristic function defined in Eq. (\ref{fun_char}) and $\theta_1(t),\ldots,\theta_p(t)$ are the periodic parameters  (periodicity with respect to  $t\in{\Z}$) with the period equal to $T$.
\end{definition}
\begin{definition} \label{PARm_def}
(\textbf{Multidimensional PAR(p) model})  A time series $\{\textbf{X}(t)\}=\{X_1(t),\ldots,X_m(t)\}$ is called the m-dimensional $\alpha-$stable periodic autoregressive model of order $p$ with a period $T\in \N$ if for each $t\in{\Z}$ it satisfies the following system of equations
	\begin{equation}\label{par_p_m}
	\textbf{X}(t)=\mathbf{\Theta_1}(t)\textbf{X}(t-1)+\ldots+\mathbf{\Theta_p}(t)\textbf{X}(t-p)+\textbf{Z}(t),
	\end{equation}
where $\{\textbf{Z}(t)\} = \{Z_1(t), \ldots, Z_m(t)\}$ is the m-dimensional $S\alpha S$ random vector in ${\R}^m$ with the characteristic function defined in Eq. (\ref{fun_char}) and $\mathbf{\Theta_1}(t),\ldots,\mathbf{\Theta_p}(t)$ are the $m \times m$ coefficient matrices periodic in $t$ with same period equal to $T$. Additionally,  we assume that $\textbf{Z}(t)$ is independent of $\textbf{Z}(t+h)$ for all $h\neq 0$. 
\end{definition}
}

\begin{remark}
If the period $T$ is equal to $1$, then the periodic autoregressive models presented in Definitions \ref{PAR_def} and \ref{PARm_def} reduce to the one-dimensional and $m$-dimensional, respectively, autoregressive (AR) time series with the coefficients constant in time. 
\end{remark}

As we mentioned before, in this paper we focus on multidimensional $\alpha-$stable periodic autoregressive model of order 1. More precisely, we consider the $m$-dimensional time series $\{ \textbf{X}(t)\} = \{X_1(t),\ldots, X_m(t)\}$ satisfying Eq. (\ref{par_p_m}) with $p=1$. For the elements of the coefficient matrix $\mathbf{\Theta}(t)=\mathbf{\Theta}_1(t)$ we propose the following notation
{\begin{equation}
	\Theta(t)= \left[
	\begin{array}{cccc}
	\theta_{11}(t) & \theta_{12}(t) & \ldots & \theta_{1m}(t)\\
	\theta_{21}(t) & \theta_{22}(t) & \ldots & \theta_{2m}(t)\\
	\ddots & \ddots & \ddots & \ddots \\
	\theta_{m1}(t) & \theta_{m2}(t) & \ldots & \theta_{mm}(t)\\	\end{array}
	\right],
	\label{Theta}
\end{equation}}
where ${\theta_{ij}(t)}$ are periodic in $t$ with period equal to $T$ for $i,j \in \{1,\ldots,m\}$. We mention here that the simplified version of the above-defined PAR(1) model, namely the two-dimensional time series with $p=1$, was considered by the authors in \citep{our_2dim_par} and the form of the bounded solution for the $m$-dimensional PAR(1) model is analogous to the one presented in \citep{our_2dim_par} for bivariate case. Namely, based on the theorem formulated in \citep{peiris}, in the Hilbert space of symmetric $\alpha-$stable random variables with $1<\alpha<2$ with the covariation norm defined in Eq. (\ref{norm}), the bounded solution of Eq. (\ref{par_p_m}) with $p=1$ has the form
\begin{eqnarray}\label{main2}
	\mathbf{X}\left(t\right)=\sum_{j=0}^{+\infty}g(t,t-j+1)\mathbf{Z}\left(t-j\right),
\end{eqnarray}
where $g(t,t-j+1)$ is given as follows
\begin{eqnarray}\label{g}
g(t,t-j+1)=
\begin{cases}
I & \mbox{when}~ j=0,\\
\Theta(t)\Theta(t-1)...\Theta(t-j+1) & \mbox{when}~ j>0.
\end{cases}
\end{eqnarray}
Moreover, the conditions guaranteeing the absolute convergence with probability $1$ of the solution for all $t\in \Z$ are given as follows
\begin{eqnarray}\label{cond1}
\sum_{j=0}^{+\infty}|g_{rl}(t,t-j+1)|<+\infty
\end{eqnarray}
for all $r,l\in\{1,2,\ldots,m\}$, where $g_{rl}(t,t-j+1)$ is the $(r,l)$ element of $g(t,t-j+1)$. Below we formulate the expressions for the cross-covariation of the $m$-dimensional $\alpha-$stable PAR(1) model's components. 

\begin{lemma} \label{covariation_general}
		Let $\{\mathbf{X}(t)\}=\{X_1(t),\ldots,X_m(t)\}$, $t\in\Z$ be the bounded solution of Eq. (\ref{par_p_m}) with $p=1$ presented by Eq. (\ref{main2}). Then for $t,s \in \Z$ the cross-covariation is given as follows
	\begin{enumerate}
		\item[(a)] if $s\geq t$
		\begin{multline} \label{CV1}
		\mathrm{CV}(X_r(s),X_l(t))=\\\sum_{j=0}^{+\infty}\int_{S_m}\left(g_{l1}({t},{t-j+1})s_1+g_{l2}({t},{t-j+1})s_2+\ldots+g_{lm}({t},{t-j+1})s_m\right)^{\langle\alpha-1\rangle}\\\left(g_{r1}({s},{t-j+1})s_1+g_{r2}({s},{t-j+1})s_2+\ldots+g_{rm}({s},{t-j+1})s_m\right)\Gamma(ds),
		\end{multline}
		\item[(b)] if $s\leq t$
		\begin{multline}
		\mathrm{CV}(X_r(s),X_l(t))=\\\sum_{j=0}^{+\infty}\int_{S_m}\left(g_{l1}({t},{s-j+1})s_1+g_{l2}({t},{s-j+1})s_2+\ldots+g_{lm}({t},{s-j+1})s_m\right)^{\langle\alpha-1\rangle}\\\left(g_{r1}({s},{s-j+1})s_1+g_{r2}({s},{s-j+1})s_2+\ldots+g_{rm}({s},{s-j+1})s_m\right)\Gamma(ds),
		\end{multline}
	\end{enumerate} 
	where $r,l \in \{1,\ldots,m\}$ and $r \neq l$.
\end{lemma}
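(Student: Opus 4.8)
The plan is to substitute the moving-average representation (\ref{main2})--(\ref{g}) into the covariation, reorganise the two series around a common innovation time, and then peel off the innovation vectors one at a time using the algebraic properties of the covariation (linearity in the first argument, additivity in the second only between independent summands, nullity for independent arguments) together with the independence of $\{\mathbf{Z}(u)\}$ across $u$ assumed in Definition \ref{PARm_def}.

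\emph{Reindexing.} Consider case (a), $s\ge t$ (case (b) being symmetric). From (\ref{main2})--(\ref{g}),
\[
X_l(t)=\sum_{j\ge 0}\sum_{k=1}^m g_{lk}(t,t-j+1)\,Z_k(t-j),\qquad
X_r(s)=\sum_{i\ge 0}\sum_{k=1}^m g_{rk}(s,s-i+1)\,Z_k(s-i).
\]
I would rewrite the second series by its innovation time $u=s-i$ and then substitute $u=t-j$ (so that $i=(s-t)+j\ge 0$ since $s\ge t$); a short check of the second argument of $g$ shows that the coefficient of $Z_k(t-j)$ inside $X_r(s)$ is exactly $g_{rk}(s,t-j+1)$. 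Hence $X_r(s)=\sum_{j\ge 0}B_j+R$, where $B_j:=\sum_k g_{rk}(s,t-j+1)Z_k(t-j)$ and $R$ collects the finitely many terms with innovation times $u\in\{t+1,\dots,s\}$; set also $A_j:=\sum_k g_{lk}(t,t-j+1)Z_k(t-j)$, so that $X_l(t)=\sum_{j\ge 0}A_j$. Both $A_j$ and $B_j$ are linear functionals of the single vector $\mathbf{Z}(t-j)\sim S_\alpha(\Gamma)$.

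\emph{Peeling off innovations.} Since the covariation is additive and linear in its first argument and vanishes when the first argument is independent of the second, and since $R$ depends only on $\{\mathbf{Z}(u):u>t\}$ (independent of $X_l(t)$), one gets $\mathrm{CV}(X_r(s),X_l(t))=\sum_{j\ge 0}\mathrm{CV}(B_j,X_l(t))$. For fixed $j$, split $X_l(t)=A_j+\sum_{i\ne j}A_i$: the two pieces depend on disjoint, hence independent, families of innovation vectors, so additivity of the covariation in the second argument is legitimate here, and $\mathrm{CV}\!\left(B_j,\sum_{i\ne j}A_i\right)=0$ because $B_j$ is independent of that remainder. Therefore $\mathrm{CV}(B_j,X_l(t))=\mathrm{CV}(B_j,A_j)$. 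It remains to evaluate this last quantity: writing $\mathbf{b}=(g_{r1}(s,t-j+1),\dots,g_{rm}(s,t-j+1))$ and $\mathbf{a}=(g_{l1}(t,t-j+1),\dots,g_{lm}(t,t-j+1))$, the pair $(B_j,A_j)=(\langle\mathbf{b},\mathbf{Z}(t-j)\rangle,\langle\mathbf{a},\mathbf{Z}(t-j)\rangle)$ is jointly $S\alpha S$ with characteristic function $\exp\{-\int_{S_m}|\theta_1\langle\mathbf{b},\mathbf{s}\rangle+\theta_2\langle\mathbf{a},\mathbf{s}\rangle|^\alpha\Gamma(\mathrm{d}s)\}$ by (\ref{fun_char}); writing the vector $(\langle\mathbf{b},\mathbf{s}\rangle,\langle\mathbf{a},\mathbf{s}\rangle)$ in polar form identifies the spectral measure of $(B_j,A_j)$, and substituting into (\ref{CV-def}) — using that $(x,y)\mapsto xy^{\langle\alpha-1\rangle}$ is positively homogeneous of degree $\alpha$, so the radial normalisation cancels — yields $\mathrm{CV}(B_j,A_j)=\int_{S_m}\langle\mathbf{b},\mathbf{s}\rangle\,\langle\mathbf{a},\mathbf{s}\rangle^{\langle\alpha-1\rangle}\Gamma(\mathrm{d}s)$, which is precisely the $j$-th summand of (\ref{CV1}). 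Summing over $j$ proves (a); for (b) one repeats the argument reindexing both series by the innovation time $u=s-j$.

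\emph{Main obstacle.} The two delicate points are: (i) justifying that the covariation commutes with the infinite sums, i.e. that $\mathrm{CV}\!\left(\sum_j B_j+R,\,\cdot\right)=\sum_j\mathrm{CV}(B_j,\cdot)$, which rests on convergence in the space of jointly $S\alpha S$ variables equipped with the norm (\ref{norm}), guaranteed by condition (\ref{cond1}) (cf. \citep{peiris,Taqqu}); and (ii) tracking the independence pattern carefully enough to be allowed to split the \emph{second} argument, since additivity there holds only between independent summands. Once these are secured, the rest is the routine bookkeeping sketched above.
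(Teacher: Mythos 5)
Your proof is correct, but it follows a genuinely different route from the paper's. The paper proves the lemma by computing the joint characteristic function of $\theta_1X_r(s)+\theta_2X_l(t)$ directly from the moving-average representation (the log-characteristic function splits into a sum over innovation times by independence of the $\mathbf{Z}(u)$'s) and then extracts the covariation from the single identity $\mathrm{CV}(X_r(s),X_l(t))=\frac{1}{\alpha}\,\partial_{\theta_1}\sigma^\alpha(\theta_1,\theta_2)\big|_{\theta_1=0,\theta_2=1}$; the terms with innovation times in $\{t+1,\dots,s\}$ disappear automatically because their contribution to $\sigma^\alpha$ is independent of $\theta_2$ and vanishes upon differentiating at $\theta_1=0$. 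You instead peel off the innovations one at a time using the algebraic properties of the covariation, reducing everything to $\mathrm{CV}(B_j,A_j)$ for two linear functionals of a single stable vector, which you evaluate from the spectral measure via homogeneity. Both routes rest on the same independence structure and the same underlying spectral computation, and your reindexing (coefficient of $Z_k(t-j)$ in $X_r(s)$ equal to $g_{rk}(s,t-j+1)$) matches the paper's. The trade-off: the paper's derivation is a single closed-form calculation whose only delicate point is differentiating under the infinite sum, whereas yours must invoke the restricted additivity of the covariation in the second argument (valid here because $(B_j,A_j)$ is independent of the remaining innovations, so one may apply the "independent pairs" additivity $\mathrm{CV}(B_j+0,\,A_j+W)=\mathrm{CV}(B_j,A_j)+\mathrm{CV}(0,W)$) and the continuity of $X\mapsto\mathrm{CV}(X,Y)$ needed to exchange $\mathrm{CV}$ with the infinite sum in the first argument; you correctly flag both points, and the latter follows from the moment representation in Eq.~(\ref{CV}) with $q=1$ together with convergence of the partial sums in the norm (\ref{norm}) (hence in $L^1$, since $\alpha>1$) under condition (\ref{cond1}). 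So the proposal is sound; it simply distributes the analytic burden over several properties of the covariation rather than concentrating it in one differentiation.
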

\begin{proof} \renewcommand{\qedsymbol}{}
	The proof of Lemma \ref{covariation_general} is given in Appendix A.
\end{proof}
The formulas for the cross-covariation given in Lemma \ref{covariation_general} can be simplified if we assume that the multidimensional $\alpha-$stable PAR(1) model consists of the components dependent only through the noise, i.e. the coefficients matrix given in Eq. (\ref{Theta}) has nonzero elements only on diagonal. In this case, the  components of the  matrix given in Eq. (\ref{g}) are given by
\begin{align} \label{gki1}
    g_{rl}(t,t-j+1)&=0 \quad &\text{for}~r\neq l\\
    g_{rl}(t,t-j+1)&=g_{rr}(t,t-j+1)=\theta_{rr}(t)\theta_{rr}(t-1)\ldots\theta_{rr}(t-j+1) \quad &\text{for}~r= l.
    \label{gki2}
\end{align}
Moreover, since for any $j=NT+k$, $N\in\N$, $k\in\{0,1,..,T-1\}$ one can show that 
\begin{eqnarray}\label{AandD}
	g_{rr}(t,t-j+1)=P_r^Ng_{rr}(t,t-k+1),
\end{eqnarray}
where $P_r=\theta_{rr}(1)\theta_{rr}(2)...\theta_{rr}(T)$, the conditions given in Eq. (\ref{cond1}) simplify to the fact that $|P_r|<1$ for all $r\in\{1,\ldots,m\}$. The formulas for the cross-covariation corresponding to the simplified model are presented in the Lemma given below. 
\begin{lemma} \label{lemma3}
		Let $\{\mathbf{X}(t)\}=\{X_1(t),\ldots,X_m(t)\}$, $t\in\Z$ be the bounded solution of a simplified version of Eq. (\ref{par_p_m}) with $p=1$ given by Eq. (\ref{main2}) with $\theta_{rl}=0$ for $r\neq l$ in Eq. (\ref{Theta}). Then for $t,s \in \Z$ the cross-covariation is given as follows
			\begin{enumerate}
			\item[(a)] if $s\geq t$
			\begin{multline}\label{CV_1}
			\mathrm{CV}(X_r(s),X_l(t))=\\\frac{g_{rr}(s,t+1)\,\int_{S_m}s_l^{\langle \alpha-1\rangle}s_r\Gamma(ds)}{1-P_l^{\langle\alpha-1\rangle}P_r}\sum_{k=0}^{T-1}\left(g_{ll}({t},{t-k+1})\right)^{\langle \alpha-1 \rangle}g_{rr}({t},{t-k+1}),
			\end{multline}
			\item[(b)] if $s\leq t$
			\begin{multline}
			\mathrm{CV}(X_r(s),X_l(t))=\\\frac{g_{ll}(t,s+1)^{\langle \alpha-1\rangle}\,\int_{S_m}s_l^{\langle \alpha-1\rangle}s_r\Gamma(ds)}{1-P_l^{\langle\alpha-1\rangle}P_r}\sum_{k=0}^{T-1}\left(g_{ll}({s},{s-k+1})\right)^{\langle \alpha-1 \rangle}g_{rr}({s},{s-k+1}),
			\end{multline}
		\end{enumerate} 
		where $r,l \in \{1,\ldots,m\}$ and $r \neq l$.
	\end{lemma}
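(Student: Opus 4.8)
The plan is to specialize the general formulas of Lemma~\ref{covariation_general} to a diagonal coefficient matrix and then to evaluate the resulting series in closed form by recognizing a geometric sum. I would carry out the computation for case (a), i.e.\ $s\geq t$; case (b) follows by the same steps with the roles of $s$ and $t$ interchanged.

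First I would substitute the relations \eqref{gki1}--\eqref{gki2} into \eqref{CV1}. Since $g_{lk}(t,t-j+1)=0$ for $k\neq l$ and $g_{rk}(s,t-j+1)=0$ for $k\neq r$, in the first bracket only the summand containing $s_l$ survives and in the second only the one containing $s_r$, so that
\begin{equation*}
\mathrm{CV}(X_r(s),X_l(t))=\sum_{j=0}^{+\infty}\int_{S_m}\bigl(g_{ll}(t,t-j+1)\,s_l\bigr)^{\langle\alpha-1\rangle}\bigl(g_{rr}(s,t-j+1)\,s_r\bigr)\,\Gamma(ds).
\end{equation*}
Using the multiplicativity of the signed power, $(ab)^{\langle\alpha-1\rangle}=a^{\langle\alpha-1\rangle}b^{\langle\alpha-1\rangle}$, the deterministic factors can be taken out of the integral, giving
\begin{equation*}
\mathrm{CV}(X_r(s),X_l(t))=\left(\int_{S_m}s_l^{\langle\alpha-1\rangle}s_r\,\Gamma(ds)\right)\sum_{j=0}^{+\infty}g_{ll}(t,t-j+1)^{\langle\alpha-1\rangle}\,g_{rr}(s,t-j+1).
\end{equation*}

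Next, directly from the definition \eqref{g} one has, for $s\geq t$ and every $j\geq0$, the factorization $g_{rr}(s,t-j+1)=g_{rr}(s,t+1)\,g_{rr}(t,t-j+1)$, and the first factor does not depend on $j$, so it can be pulled out of the sum. It then remains to evaluate $\sum_{j\geq0}g_{ll}(t,t-j+1)^{\langle\alpha-1\rangle}g_{rr}(t,t-j+1)$. Writing $j=NT+k$ with $N\geq0$ and $k\in\{0,\ldots,T-1\}$ and applying \eqref{AandD} componentwise gives $g_{rr}(t,t-j+1)=P_r^{N}g_{rr}(t,t-k+1)$ and $g_{ll}(t,t-j+1)=P_l^{N}g_{ll}(t,t-k+1)$; together with the elementary identity $(P_l^{N})^{\langle\alpha-1\rangle}=(P_l^{\langle\alpha-1\rangle})^{N}$ the double sum factorizes as
\begin{equation*}
\left(\sum_{N=0}^{+\infty}\bigl(P_l^{\langle\alpha-1\rangle}P_r\bigr)^{N}\right)\sum_{k=0}^{T-1}g_{ll}(t,t-k+1)^{\langle\alpha-1\rangle}g_{rr}(t,t-k+1).
\end{equation*}
Since $1<\alpha<2$ and $|P_l|,|P_r|<1$ (the stability condition recalled right after \eqref{AandD}), the ratio satisfies $\bigl|P_l^{\langle\alpha-1\rangle}P_r\bigr|=|P_l|^{\alpha-1}|P_r|<1$, so the geometric series converges to $1/(1-P_l^{\langle\alpha-1\rangle}P_r)$. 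Collecting the factors reproduces \eqref{CV_1}. For case (b), since now $t\geq s$, one instead factors $g_{ll}(t,s-j+1)=g_{ll}(t,s+1)\,g_{ll}(s,s-j+1)$ inside the signed-power bracket and expands the remaining sum around the index $s$; the same geometric-series argument then yields the stated formula.

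The calculation is essentially bookkeeping; the two places that need a little care are the use of multiplicativity of the signed power --- both under the integral and in the identity $(P_l^{N})^{\langle\alpha-1\rangle}=(P_l^{\langle\alpha-1\rangle})^{N}$, where one must track the sign of $P_l$ and its modulus separately --- and checking that the geometric ratio $P_l^{\langle\alpha-1\rangle}P_r$ has modulus strictly less than one, which is precisely what legitimizes rearranging the summations and writing the sum in closed form.
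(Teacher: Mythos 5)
Your argument is correct and follows exactly the route the paper intends: the published proof of this lemma is a one-line reference to Lemma~\ref{covariation_general} together with Eqs.~(\ref{gki1}), (\ref{gki2}) and (\ref{AandD}), and your computation simply carries out that specialization in full, including the factorization $g_{rr}(s,t-j+1)=g_{rr}(s,t+1)\,g_{rr}(t,t-j+1)$, the decomposition $j=NT+k$, and the geometric series with ratio $P_l^{\langle\alpha-1\rangle}P_r$ of modulus $|P_l|^{\alpha-1}|P_r|<1$. The two points you flag as needing care (multiplicativity of the signed power and convergence of the geometric series) are handled correctly, so nothing is missing.
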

	\begin{proof} \renewcommand{\qedsymbol}{}
	The proof follows directly from Lemma \ref{covariation_general} and the formulas given in Eqs. (\ref{gki1}), (\ref{gki2}) and (\ref{AandD}).
    \end{proof}
Let us notice that since for $r=1,\ldots,m$ the coefficients $\theta_{rr}(t)$ in Eq. (\ref{Theta}) are periodic with respect to $t$ with the period equal to $T$, the cross-covariation given in Lemma \ref{lemma3} is also periodic in $s\in \Z$ for any $s-t\in\Z$ with the period equal to $T$. As it was mentioned, the two-dimensional version of the process considered in this paper was examined in the authors previous article where they analyze the asymptotic relation between the dependence measures in symmetric $\alpha-$stable case, see \citep{our_2dim_par}.  
    
\section{Estimation method for multidimensional PAR(1) model with $\alpha-$stable distribution} \label{est_method}

In this section, we present a new procedure leading to the estimation of the parameters corresponding to the m-dimensional PAR(1) time series with $\alpha-$stable distribution satisfying Eq. (\ref{par_p_m}) with $p=1$. The method is based on covariation (based on moment) defined in Section \ref{est_cov}.

Let $\{\mathbf{X}(t)\}=\{(X_1(t),\ldots,X_m(t)\}$ be the bounded solution of Eq. (\ref{par_p_m}) with $p=1$ given by Eq. (\ref{main2}). Moreover, let us assume that the period is known and equal to $T\in \N$. Now, since each $t \in \Z$ can be expressed as $t=nT+v$, where $v = 1,\ldots,T$ and $n\in\Z$, one can rewrite the corresponding autoregressive equation in a different form
\begin{equation} \label{PAR_eq}
	\textbf{X}(nT+v)=\mathbf{\Theta}(nT+v)\textbf{X}(nT+v-1) + \textbf{Z}(nT+v).
	\end{equation} 
Let us multiply the above equation by vector $\mathrm{sign}(\textbf{X}(nT+v-1))'$, where
\begin{equation*}
\mathrm{sign}(\textbf{X}(nT+v-1))=\left[
\begin{array}{c}
\mathrm{sign}(X_1(nT+v-1))\\
\mathrm{sign}(X_2(nT+v-1))\\
\vdots \\
\mathrm{sign}(X_m(nT+v-1))\\
\end{array}
\right].
\end{equation*}
Then, by taking the expectation we get
\begin{multline} \label{preliminary_eq}
	E[\textbf{X}(nT+v)\mathrm{sign}(\textbf{X}(nT+v-1))']=\\\mathbf{\Theta}(v)E[\textbf{X}(nT+v-1)\mathrm{sign}(\textbf{X}(nT+v-1))'] + E[\textbf{Z}(nT+v)\mathrm{sign}(\textbf{X}(nT+v-1))'],
\end{multline}
 since $\Theta(nT+v)=\Theta(v)$ because $\Theta(t)$ is periodic in $t$ with period $T$. Let us notice that the last component in Eq. (\ref{preliminary_eq}) vanishes since for all $r,l=1,\ldots,m$ we have
 \begin{equation*}
     \E[{Z}_r(nT+v)\mathrm{sign}({X}_l(nT+v-1))] = E[{Z_r}(nT+v)]E[\mathrm{sign}({X}_l(nT+v-1))] = 0.
 \end{equation*}
 Finally, let us multiply Eq. (\ref{preliminary_eq}) by the following $m \times m$  matrix
\begin{equation} \label{matrix}
\left[
\begin{array}{ccccc}
\frac{1}{E|X_1(nT+v-1)|} & 0 & \ldots & 0\\
0 & \frac{1}{E|X_2(nT+v-1)|} & \ldots & 0\\
\ldots & \ldots & \ldots & \ldots\\
0 & 0 & \ldots & \frac{1}{E|X_{m}(nT+v-1)|}\\
\end{array}
\right],
\end{equation}
which leads to the system of equations given below
\begin{equation}
\mathbf{NCV}^v(1)=\mathbf{\Theta}(v)\mathbf{NCV}^{(v-1)}(0), 
\label{system}
\end{equation}
where $\mathbf{NCV}^v(h)$ is the $m \times m$ normalized covariation matrix with the elements taking the following form
\begin{equation} 
 NCV_{r,l}^v(h)  = NCV(X_r(nT+v),X_l(nT+v-h))= \frac{E[X_r(nT+v)sign(X_l(nT+v-h)]}{E \left | X_l(nT+v-h) \right |},
 \end{equation}
where $v = 1,\ldots,T$ and $r,l = 1,\ldots,m $. Let us note that according to \citep{Taqqu} the expected values $\E|X_l(nT+v-h)|$ for $l=1,\ldots,m$ in Eq. (\ref{matrix}) are non-zero.

In our method, to determine the unknown coefficients matrices ${\Theta}(v)$, where $v=1,\ldots,T$, we replace the elements of normalized covariation matrices given in Eq. (\ref{system}) by their estimators, i.e. we consider the following system of equations
\begin{equation}\label{system_est}
\widehat{\mathbf{NCV}}^v(1)= \widehat{\Theta}(v)\widehat{\mathbf{NCV}}^{(v-1)}(0),
\end{equation}
for $v = 1,\ldots,T$. Let us note that since the considered model is non-stationary, to calculate the empirical auto- and cross-covariation we cannot directly apply the estimators given in Section \ref{est_cov}. Therefore, for the matrix ${\mathbf{NCV}^v}(h)$ we need to introduce a modified estimator which takes the following form
\begin{equation} \label{NCV_PARm}
 \widehat{NCV}_{r,l}^v(h)  = \frac{\sum_{n=n_0}^{N-1}x_r(nT+v)sign(x_l(nT+v-h))}{\sum_{n=n_0}^{N-1}\left | x_l(nT+v-h) \right |},
 \end{equation}
where $v = 1,\ldots,T$, $ N = \lfloor \frac{L}{T} \rfloor$ and $L$ is the trajectory length,
\begin{align} \label{r}
n_0 = \left\{\begin{matrix}
0 & \text{\qquad when  $v>0$ and  $v-h>0$,}\\ 
 1 & \text{otherwise,} 
\end{matrix}\right.
\end{align}
$r,l = 1,2,\ldots,m$, and $\{\textbf{x}(t)\} = \{x_1(t),\ldots,x_m(t)\}$, $t=1,\ldots,L$ denotes a sample trajectory of length $L$ corresponding to the m-dimensional time series $\{\mathbf{X}(t)\}=\{(X_1(t),\ldots,X_m(t))\}$. Let us note that this is equivalent to estimating the normalized cross-covariation of sub-samples $\{x_r(v),x_r(T+v),\ldots,x_r((N-1)T+v)\}$ and $\{x_l(v-h),x_l(T+v-h),\ldots,x_l((N-1)T+v-h)\}$ where $r,l=1,\ldots,m$.

When the matrices $\widehat{\mathbf{NCV}}^{(v-1)}(0)$ given in Eq. (\ref{system_est}) are non-singular for $v=1,\ldots,T$, to calculate $\widehat{\mathbf{\Theta}}(v)$ we can use the following formula
\begin{equation}
    \widehat{\Theta}(v)=\widehat{\mathbf{NCV}}^v(1){\widehat{\mathbf{NCV}}^{(v-1)}(0)}^{-1}.
\end{equation}
However, in Appendix B we propose an algorithm which enables solving Eq. (\ref{system_est}) regardless of whether $\widehat{\mathbf{NCV}}^{(v-1)}(0)$ is non-singular or singular but when the system in Eq. (\ref{system_est}) is consistent. In this case, for $v=1,\ldots,T$, we estimate $\widehat{\mathbf{\Theta}}(v)$ by using
\begin{equation*}
\widehat{\Theta}(v)={SOLUTION}_1 \left(  \widehat{\mathbf{NCV}}^{(v-1)}(0), \widehat{\mathbf{NCV}}^v(1)\right), 
\end{equation*}
where $SOLUTION_1$ is an algorithm presented in Appendix B. Moreover, we remind that
\begin{equation*}
\widehat{\mathbf{NCV}}^v(1)=\left[
\begin{array}{cccc}
\frac{\sum_{n=r}^{N-1}x_1(nT+v)sign(x_1(nT+v-1))}{\sum_{n=r}^{N-1}\left | x_1(nT+v-1)) \right |} & \ldots & \ldots & \frac{\sum_{n=r}^{N-1}x_1(nT+v)sign(x_m(nT+v-1))}{\sum_{n=r}^{N-1}\left | x_m(nT+v-1)) \right |}\\[5pt]
\frac{\sum_{n=r}^{N-1}x_2(nT+v)sign(x_1(nT+v-1))}{\sum_{n=r}^{N-1}\left | x_1(nT+v-1)) \right |} & \ldots & \ldots & \frac{\sum_{n=r}^{N-1}x_2(nT+v)sign(x_m(nT+v-1))}{\sum_{n=r}^{N-1}\left | x_m(nT+v-1)) \right |}\\[5pt]
\ldots & \ldots & \ldots & \ldots \\[5pt]
\frac{\sum_{n=r}^{N-1}x_m(nT+v)sign(x_1(nT+v-1))}{\sum_{n=r}^{N-1}\left | x_1(nT+v-1)) \right |} & \ldots & \ldots & \frac{\sum_{n=r}^{N-1}x_m(nT+v)sign(x_m(nT+v-1))}{\sum_{n=r}^{N-1}\left | x_m(nT+v-1)) \right |},
\end{array}
\right]
\label{NCV_1}
\end{equation*}
and
\begin{equation*}
\widehat{\mathbf{NCV}}^{(v-1)}(0) =
\left[ \begin{array}{cccc}
 1 & \ldots & \ldots & \frac{\sum_{n=r}^{N-1}x_1(nT+v-1)sign(x_m(nT+v-1))}{\sum_{n=r}^{N-1}\left | x_m(nT+v-1)) \right |}\\[5pt]
\frac{\sum_{n=r}^{N-1}x_2(nT+v-1)sign(x_1(nT+v-1))}{\sum_{n=r}^{N-1}\left | x_1(nT+v-1)) \right |} & 1 & \ldots & \frac{\sum_{n=r}^{N-1}x_2(nT+v-1)sign(x_m(nT+v-1))}{\sum_{n=r}^{N-1}\left | x_m(nT+v-1)) \right |}\\[5pt]
\ldots & \ldots & \ldots & \ldots \\[5pt]
\frac{\sum_{n=r}^{N-1}x_m(nT+v-1)sign(x_1(nT+v-1))}{\sum_{n=r}^{N-1}\left | x_1(nT+v-1)) \right |} & \ldots & \ldots & 1 
\end{array}
\right],
\label{NCV_0}
\end{equation*}
where $r$ is given in Eq. (\ref{r}), $v = 1,\ldots,T$ and $ \{\textbf{x}(t)\} = \{x_1(t),\ldots,x_m(t)\}$, $t=1,\ldots,L$ is a sample trajectory of length $L$ corresponding to the m-dimensional time series $\{\mathbf{X}(t)\}=\{(X_1(t),\ldots,X_m(t))\}$.

Moreover, we introduce a new estimation technique based on Yule-Walker method using the covariation based on spectral measure presented in Eq. (\ref{CV-def}). In this method, to estimate the unknown coefficient matrices ${\Theta}(v)$, where $v=1,\ldots,T$, we replace the elements of estimated normalized covariation matrices given in Eq. (\ref{system_est}) by estimated covariation matrices based spectral measure given in Section \ref{est_cov1}, i.e. we consider the following system of equations
\begin{equation}\label{system_est1}
\widehat{\mathbf{CV}}^v(1)= \widehat{\Theta}(v)\widehat{\mathbf{CV}}^{(v-1)}(0),
\end{equation}
for $v = 1,\ldots,T$. When the matrices $\widehat{\mathbf{CV}}^{(v-1)}(0)$ given in Eq. (\ref{system_est1}) are non-singular for $v=1,\ldots,T$, to calculate $\widehat{\mathbf{\Theta}}(v)$ we can use the following formula
\begin{equation}
    \widehat{\Theta}(v)=\widehat{\mathbf{CV}}^v(1){\widehat{\mathbf{CV}}^{(v-1)}(0)}^{-1}.
\end{equation}
Or
\begin{equation*}
\widehat{\Theta}(v)={SOLUTION}_1 \left(  \widehat{\mathbf{CV}}^{(v-1)}(0), \widehat{\mathbf{CV}}^v(1)\right), 
\end{equation*}
where $SOLUTION_1$ is an algorithm presented in Appendix B.

Solving the system of linear equations in the proposed approach produces biases in the
estimates. This is also demonstrated in the presented simulation study (see Section \ref{simul}).  Naturally, the bias is due to many similar or even overlapped quantities in the equations,
which leads to ill-conditioned matrices. One can try to reduce the bias applying the corrections for the obtained estimators. Such a  practice is often used especially for small trajectory lengths. In the considered case, the corrections can be obtained by the empirical analysis. However, the main goal of this paper is to demonstrate the idea of the new estimation methodology. Its further improvements   need to be investigated in the future study as a separate issue.

\section{Simulation study}\label{simul}
In this section, to verify the performance of the estimation procedures, we apply the proposed methods to two multidimensional PAR(1) time series models, i.e one is two-dimensional and second is three-dimensional versions of the PAR(1) time series models satisfying Eq. (\ref{par_p_m}) with $p=1$. Let us assume that $\{\mathbf{X}(t)\}=\{[(X_1(t),X_2(t))]'\}$ is two-dimensional PAR(1) time series model with two-dimensional symmetric $\alpha-$stable noise $\{\mathbf{Z}(t)\}=\{[Z_1(t),Z_2(t)]'\}$ with $1<\alpha<2$ and the following spectral measure 
\begin{align} \label{chosen_Gamma1}
	\Gamma=\gamma_1\delta\left(\left(z_1,z_2\right)\right)+\gamma_2\delta\left(\left(-z_1,-z_2\right)\right)+\gamma_3\delta\left(\left(-z_1,z_2\right)\right)+\gamma_4\delta\left(\left(z_1,-z_2\right)\right),
\end{align} 
where $\gamma_1=\gamma_2=\nu$ and $\gamma_3=\gamma_4=\xi$. For the $\alpha-$stable noise we take the following parameters: $z_1=1/2$, $z_2=\sqrt{3}/2$, $\nu=0.5$, and $\xi=0.2$.

Let us assume that $\{\mathbf{X}(t)\}=\{[(X_1(t),X_2(t),X_3(t))]'\}$ is three-dimensional PAR(1) time series model with the three-dimensional symmetric $\alpha-$stable noise $\{\mathbf{Z}(t)\}=\{[Z_1(t),Z_2(t),Z_3(t)]'\}$  with $1<\alpha<2$ and the following spectral measure 
\begin{multline} \label{chosen_Gamma2}
	\Gamma=\gamma_1\delta\left(\left(z_1,z_2,z_3\right)\right)+\gamma_2\delta\left(\left(-z_1,-z_2,-z_3\right)\right)+\gamma_3\delta\left(\left(-z_1,z_2,z_3\right)\right)+\gamma_4\delta\left(\left(z_1,-z_2,-z_3\right)\right)\\+\gamma_5\delta\left(\left(z_1,-z_2,z_3\right)\right)+\gamma_6\delta\left(\left(-z_1,z_2,-z_3\right)\right)+\gamma_7\delta\left(\left(z_1,z_2,-z_3\right)\right)+\gamma_8\delta\left(\left(-z_1,-z_2,z_3\right)\right),
\end{multline}
where $\gamma_1=\gamma_2=\nu_1$, $\gamma_3=\gamma_4=\xi_1$, $\gamma_5=\gamma_6=\nu_2$ and $\gamma_7=\gamma_4=\xi_2$. For the $\alpha-$stable noise we take the following parameters: $z_1=1/2$, $z_2=1/2$, $z_3 =\sqrt{2}/2$, $\nu_1=0.1$, $\xi_1=0.2$, $\nu_2=0.3$ and $\xi_2=0.5$.

We mention here that an exact method for simulating the multidimensional $\alpha$-stable random vectors is presented in \citep{Nolan_simulations}. In the simulation study presented here  two PAR(1) models, denoted as Model 1 (two-dimensional PAR(1) model) and Model 2 (three-dimensional PAR(1) model), that correspond to $T=3$ and $T = 2$, respectively, with the following coefficient matrices 
{\begin{equation}
	\Theta(1)= \left[
	\begin{array}{cc}
	0.5 & 0.1 \\
	-0.6 & 0.4 \\	\end{array}
	\right], \quad
	\Theta(2)= \left[
	\begin{array}{cc}
	0.8 & -0.1 \\
	0.3 & 0.7 \\	\end{array}
	\right], \quad
	\Theta(3)= \left[
	\begin{array}{cc}
	0.1 & -0.4 \\
	-0.5 & 0.3 \\	\end{array}
	\right]
	\label{Theta_1}
\end{equation}}
for Model 1, and 
{\begin{equation}
	\Theta(1)= \left[
	\begin{array}{ccc}
	0.8 &-0.2 & 0.7 \\
	0.1 & 0.5 & -0.6 \\	
	0.4 & 0.3 & -0.1 \\\end{array}
	\right], \quad
	\Theta(2)= \left[
	\begin{array}{ccc}
	0.4 & -0.1 & 0.3 \\
	0.5 & -0.2 & 0.4\\
	-0.3 & 0.8 &-0.6	\end{array}
	\right]
	\label{Theta_2}
\end{equation}}
for Model 2. Sample trajectories of length $L=1000$ of the considered time series are given in Figs. \ref{fig1:} and \ref{fig4:} for Model 1 and Model 2 respectively. Next, we estimate PAR coefficient matrices for Model 1 and Model 2 using Y-W methods based on moment-based covariation (Y-W-CV method) and  spectral measure-based covariation (Y-W-T method)  given in Section \ref{est_method}. The results are obtained by both estimation method using the Monte-Carlo simulations, namely, for each model, we generate $M = 1000$ trajectories of length $L= 1000$.  Also, we compare the estimated coefficient matrices using Y-W-CV method with using the Y-W-T method. In Figs. \ref{fig2:} and \ref{fig5:}, we present the boxplots of the estimated coefficient matrices using both methods for Model 1 and Model 2, respectively. 
As one can see, the medians of the values taken by estimators using both methods are close to the real values of the coefficient matrices. We can observe that the Y-W-CV-based estimators are narrower in comparison to Y-W-T ones. Thus, the Y-W-CV method is preferable here and will be further used for real data analysis presented in the next section. From Figs. \ref{fig2:} and \ref{fig5:} one can suppose that the variances for the coefficients from different periods are different. There is no reason for that  as all coefficients are estimated in the same way. This is due
to the arrangement of the figures (i.e. application of different scales on the subplots) applied to compare the results between two estimation methods.

\begin{figure} [h!]
  \centering
  \includegraphics[scale=0.35]{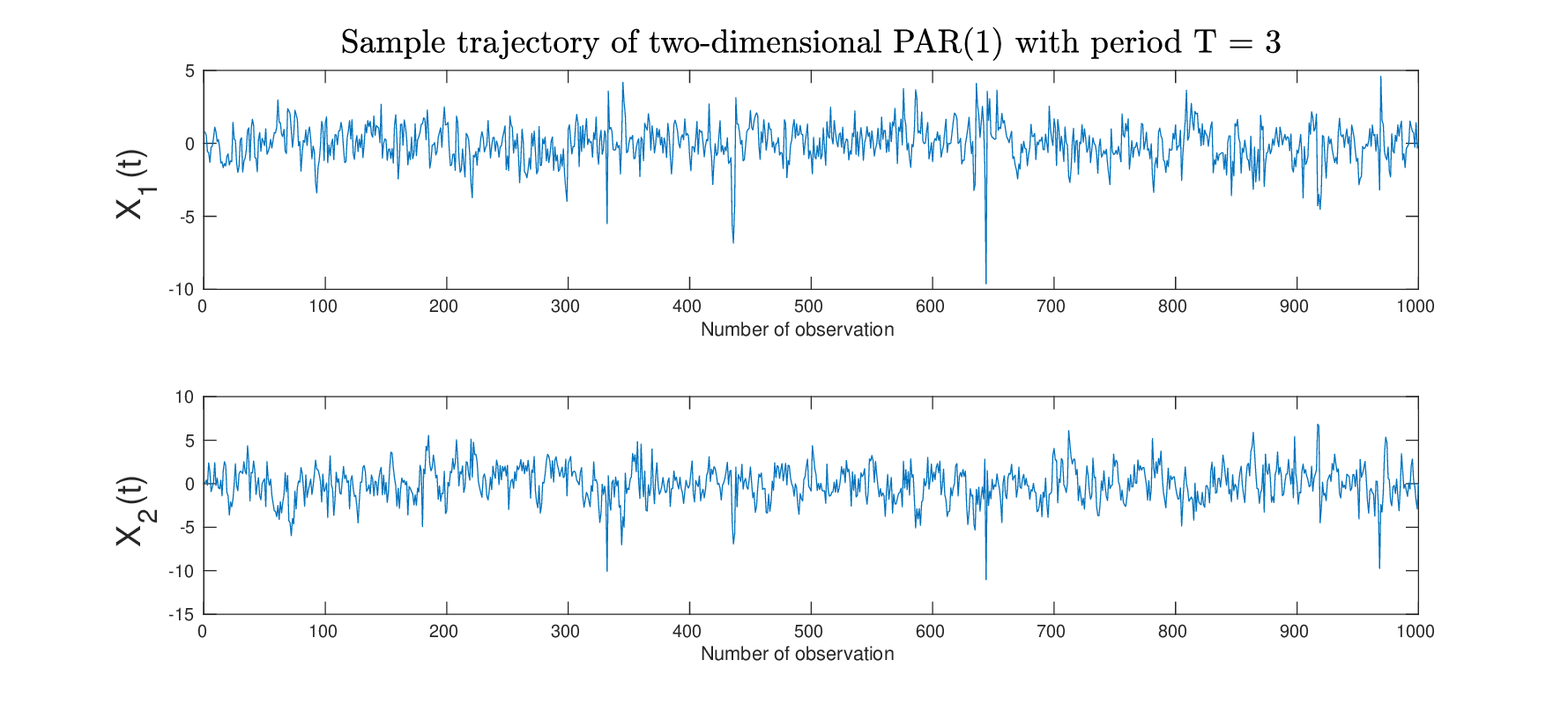}
  \caption{{Sample trajectories of the two-dimensional $\alpha-$stable PAR(1) time series of length $L =1000$ with $T = 3$ - Model 1. The coefficient matrices are given in Eq. (\ref{Theta_1}), $\alpha=1.8$ and the spectral measure is defined in Eq. (\ref{chosen_Gamma1}).}}
  \label{fig1:}
\end{figure}

\begin{figure}[h!]
  \centering
  \includegraphics[scale=0.26]{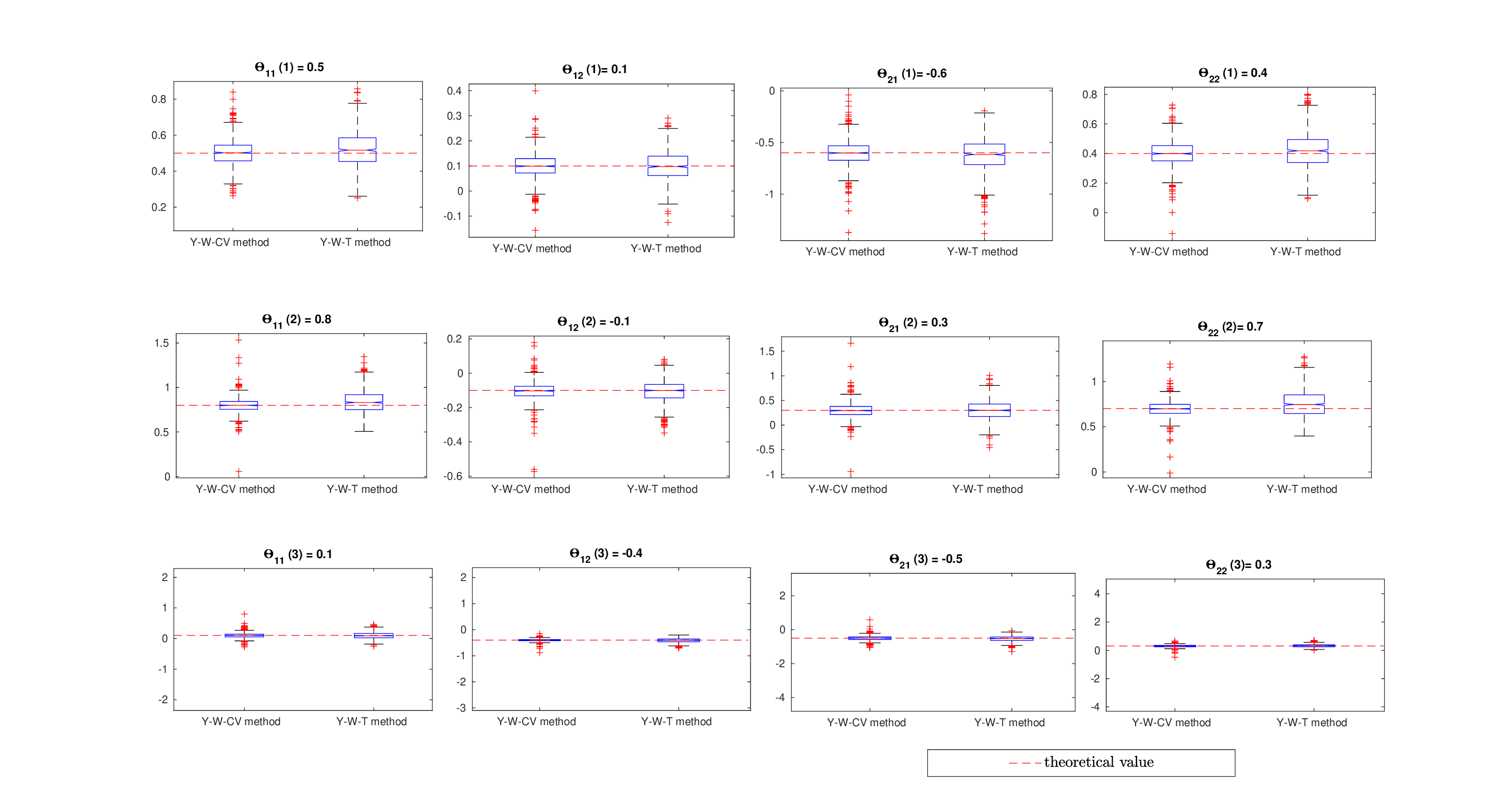}
  \caption{The boxplots of estimated coefficient matrices for the two-dimensional $\alpha-$stable PAR(1) time series of length length $L =1000$ with $T = 3$ - Model 1. The coefficient matrices are given in Eq. (\ref{Theta_1}), $\alpha=1.8$ and the spectral measure is defined in Eq. (\ref{chosen_Gamma1}). The boxplots on left hand side of all panels describe the estimated coefficient matrices values using Y-W-CV method while the right boxplots using the Y-W-T method. The results are obtained by using Monte-Carlo simulations with the number of repetitions equal to $M=1000$.} 
  \label{fig2:}
\end{figure}

\begin{figure}[h!]
  \centering
  \includegraphics[scale=0.26]{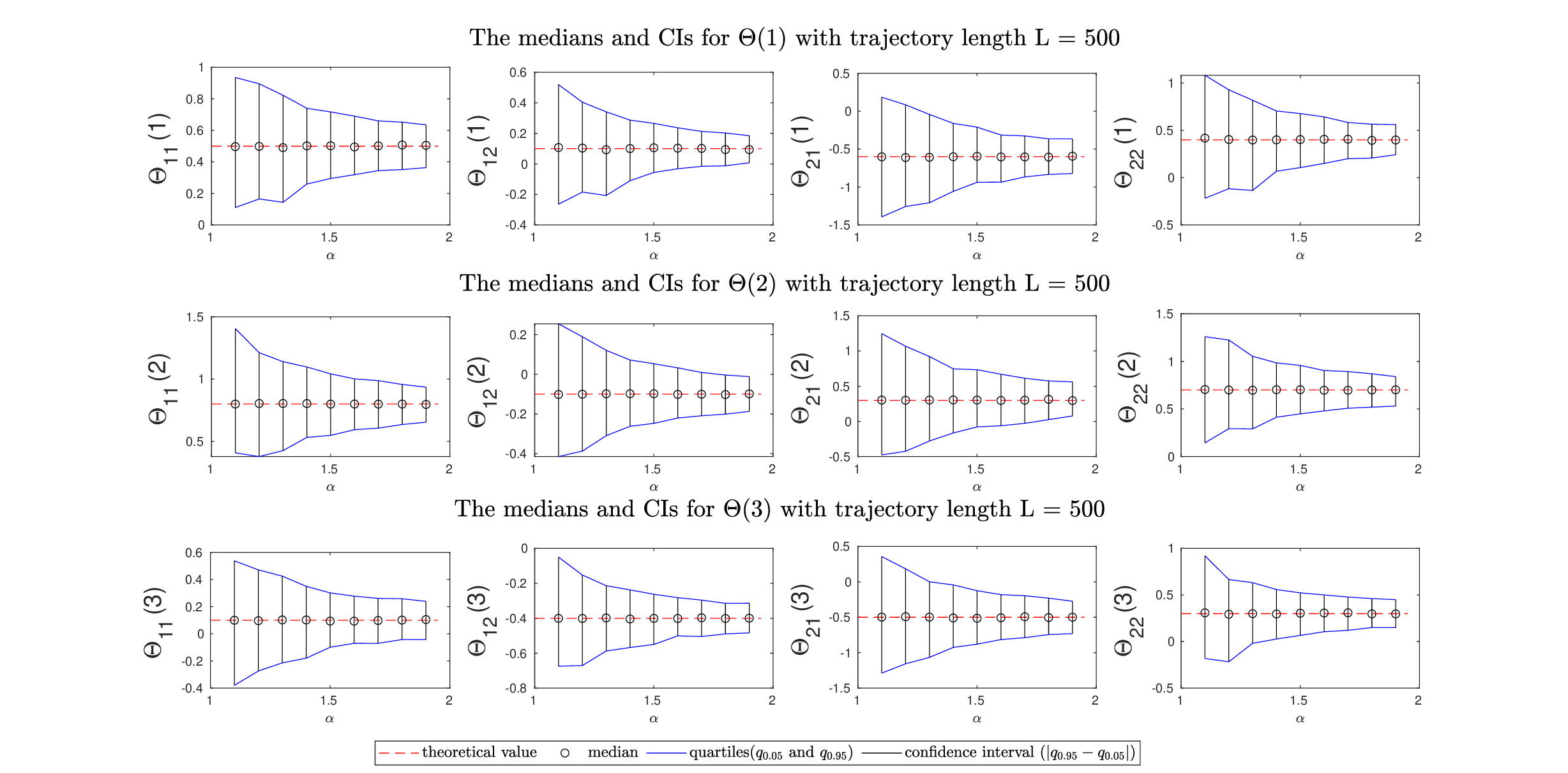}
      \caption{{The graph presenting the results (medians and 90\% empirical confidence intervals using empirical quantiles function) for the Y-W-CV estimation method applied to the two-dimensional $\alpha-$stable PAR(1) time series of length $L =500$ with $T=3$ - Model 1. The coefficient matrices are given in Eq. (\ref{Theta_1}), the spectral measure is defined in Eq. (\ref{chosen_Gamma1}) and $\alpha$ is changing from $\alpha=1.1$ to $\alpha=1.9$. The results are obtained by using Monte-Carlo simulations with the number of repetitions equal to $M=1000$.}}
  \label{fig7:}
\end{figure}
\begin{figure}[h!]
  \centering
  \includegraphics[scale=0.26]{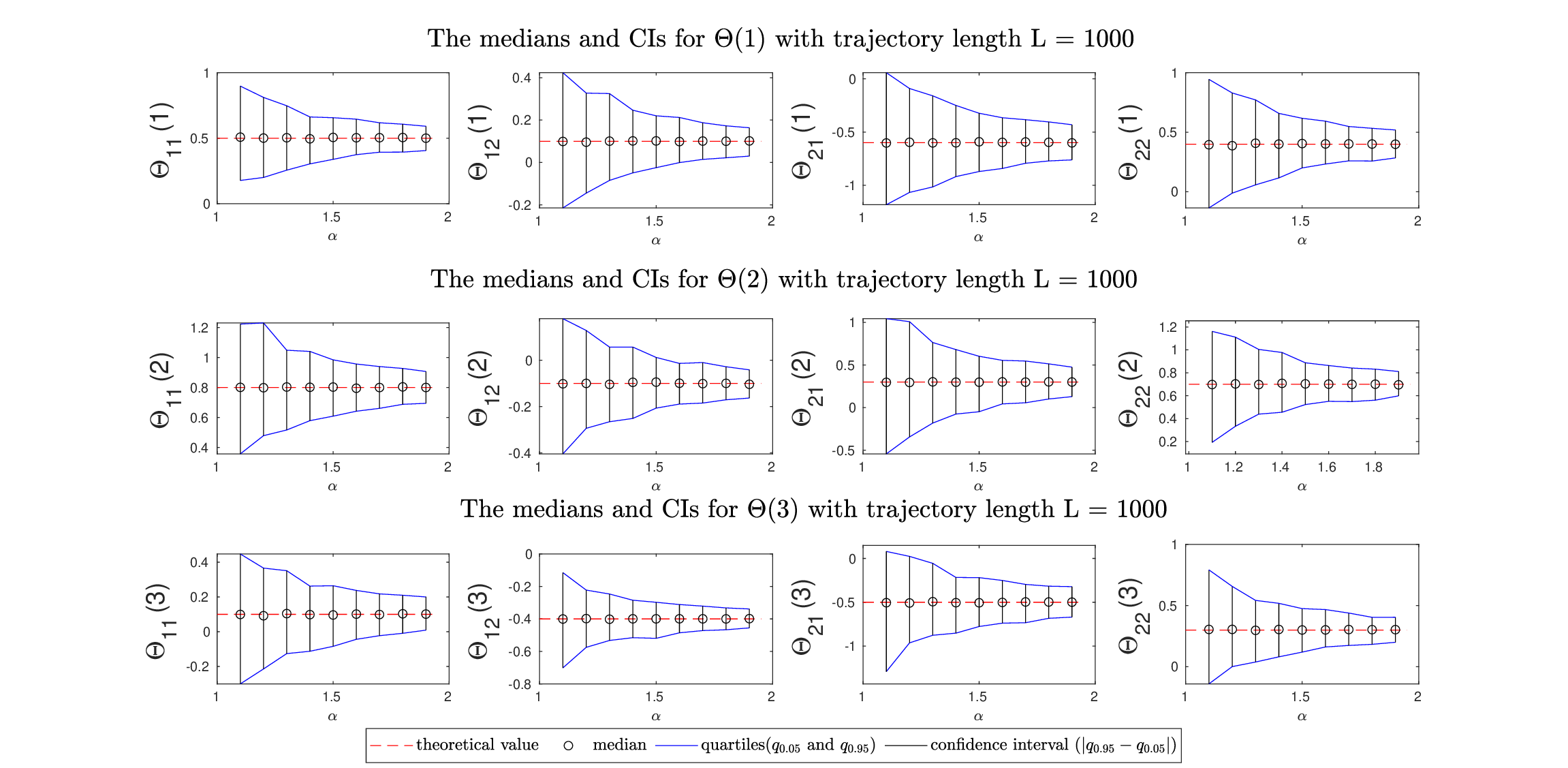}
    \caption{{The graph presenting the results (medians and 90\% empirical confidence intervals using empirical quantiles function) for the Y-W-CV estimation method applied to the two-dimensional $\alpha-$stable PAR(1) time series of length $L =1000$ with $T=3$ - Model 1. The coefficient matrices are given in Eq. (\ref{Theta_1}), the spectral measure is defined in Eq. (\ref{chosen_Gamma1}) and $\alpha$ is changing from $\alpha=1.1$ to $\alpha=1.9$. The results are obtained by using Monte-Carlo simulations with the number of repetitions equal to $M=1000$.}}
  \label{fig8:}
\end{figure}
\begin{figure}[h!]
  \centering
  \includegraphics[scale=0.26]{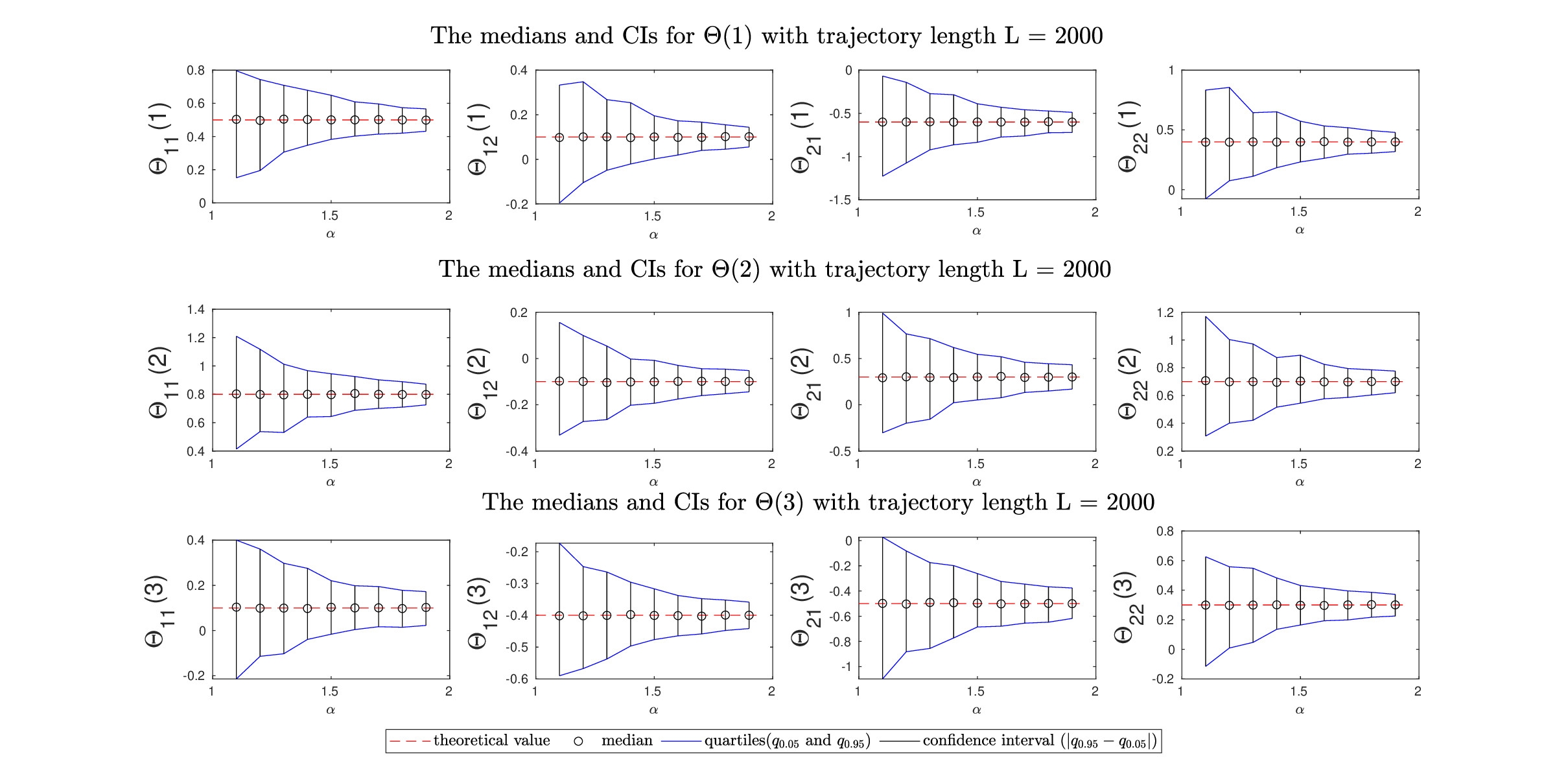}
    \caption{{The graph presenting the results (medians and 90\% empirical confidence intervals using empirical quantiles function) for the Y-W-CV estimation method applied to the two-dimensional $\alpha-$stable PAR(1) time series of length $L =2000$ with $T=3$ - Model 1. The coefficient matrices are given in Eq. (\ref{Theta_1}), the spectral measure is defined in Eq. (\ref{chosen_Gamma1}) and $\alpha$ is changing from $\alpha=1.1$ to $\alpha=1.9$. The results are obtained by using Monte-Carlo simulations with the number of repetitions equal to $M=1000$.}}
  \label{fig9:}
\end{figure}

\begin{figure} [h!]
  \centering
  \includegraphics[scale=0.35]{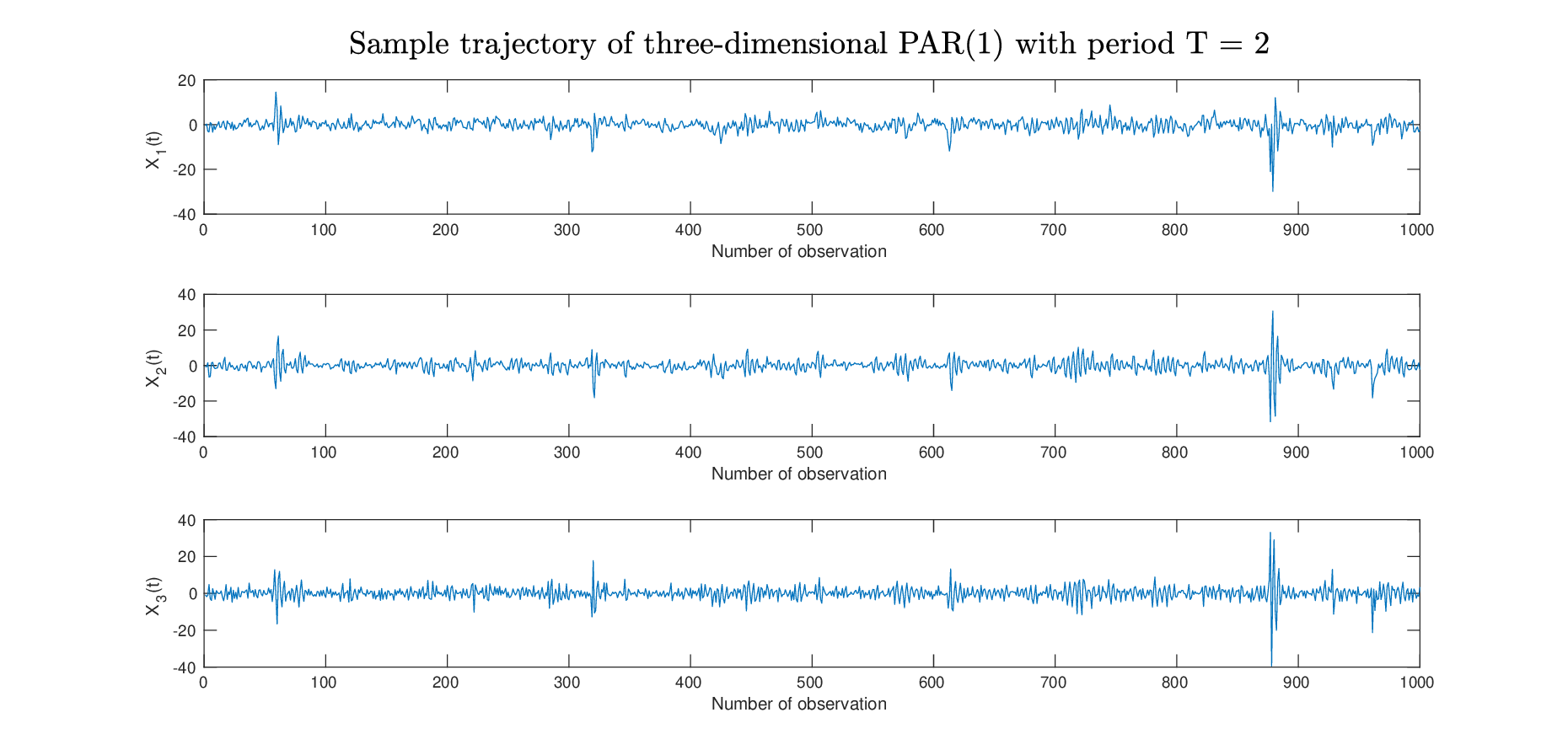}
  \caption{Sample trajectories of the three-dimensional $\alpha-$stable PAR(1) time series of length $L =1000$ with $T = 2$ - Model 2. The coefficient matrices are given in Eq. (\ref{Theta_2}), $\alpha=1.8$ and the spectral measure is defined in Eq. (\ref{chosen_Gamma2}).}
  \label{fig4:}
\end{figure}

\begin{figure}[h!]
  \centering
  \includegraphics[scale=0.26]{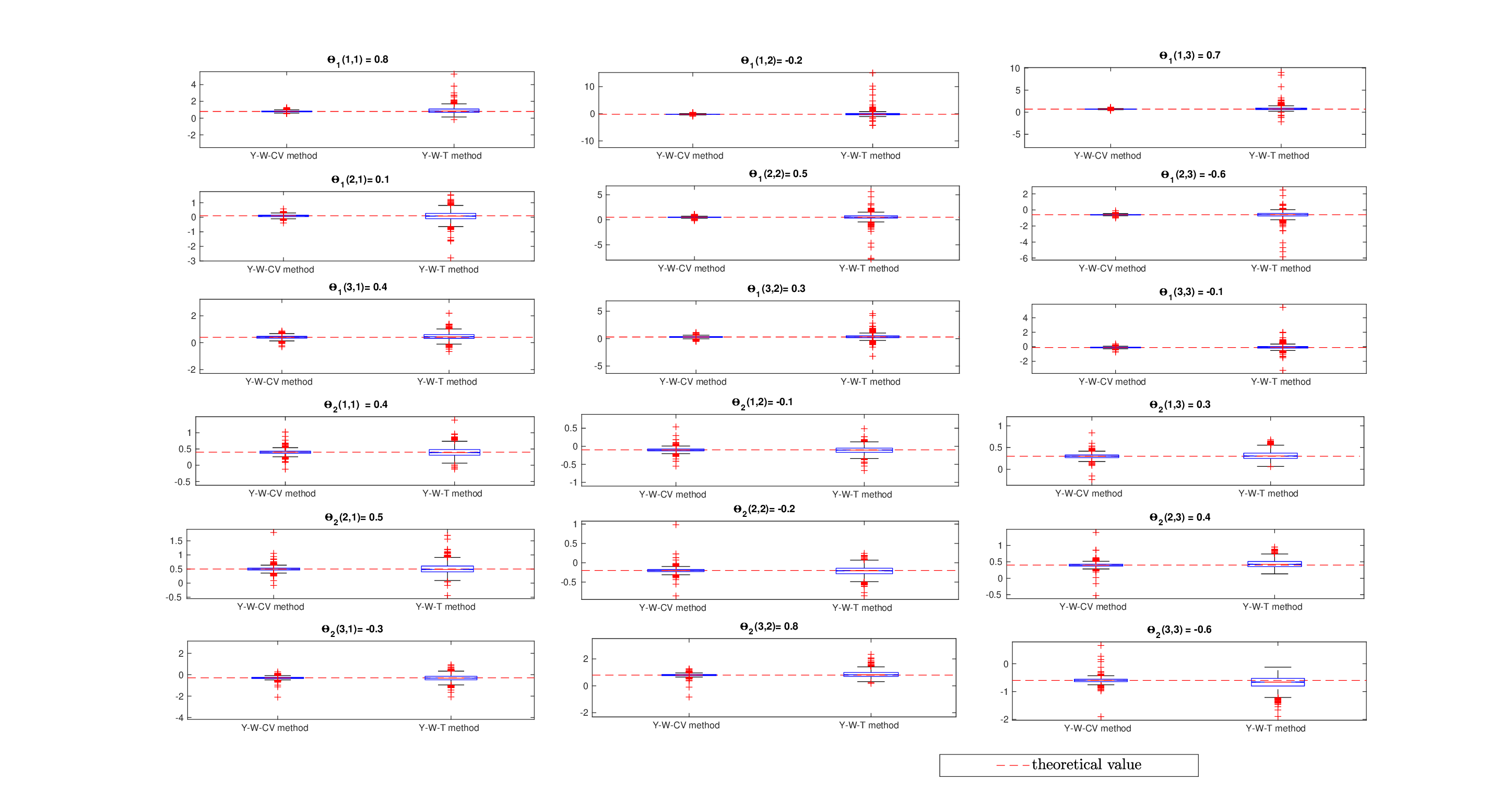}
  \caption{The boxplots of estimated coefficient matrices for the three-dimensional $\alpha-$stable PAR(1) time series of length length $L =1000$ with $T = 2$ - Model 2. The coefficient matrices are given in Eq. (\ref{Theta_2}), $\alpha=1.8$ and the spectral measure is defined in Eq. (\ref{chosen_Gamma2}). The boxplots on left hand side of all panels describe the estimated coefficient matrices values using Y-W-CV method while the right boxplots using the Y-W-T method. The results are obtained by using Monte-Carlo simulations with the number of repetitions equal to $M=1000$. }
  \label{fig5:}
\end{figure}

\begin{figure}[h!]
  \centering
  \includegraphics[scale=0.26]{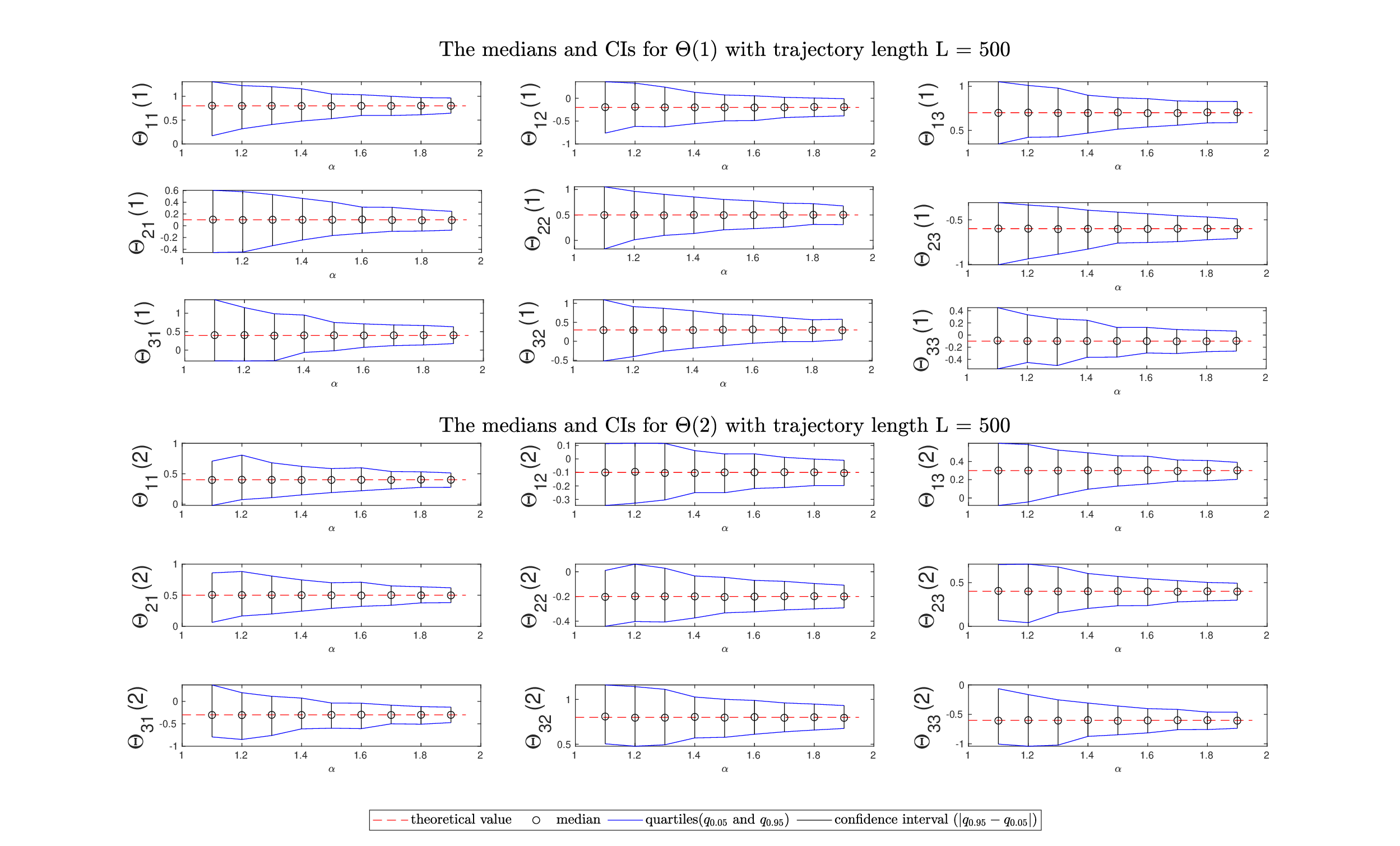}
    \caption{{The graph presenting the results (medians and 90\% empirical confidence intervals using empirical quantiles function) for the Y-W-CV estimation method applied to the three-dimensional $\alpha-$stable PAR(1) time series of length $L =500$ with $T=2$ - Model 2. The coefficient matrices are given in Eq. (\ref{Theta_2}), the spectral measure is defined in Eq. (\ref{chosen_Gamma2}) and $\alpha$ is changing from $\alpha=1.1$ to $\alpha=1.9$. The results are obtained by using Monte-Carlo simulations with the number of repetitions equal to $M=1000$.}}
  \label{fig10:}
\end{figure}
\begin{figure}[h!]
  \centering
  \includegraphics[scale=0.26]{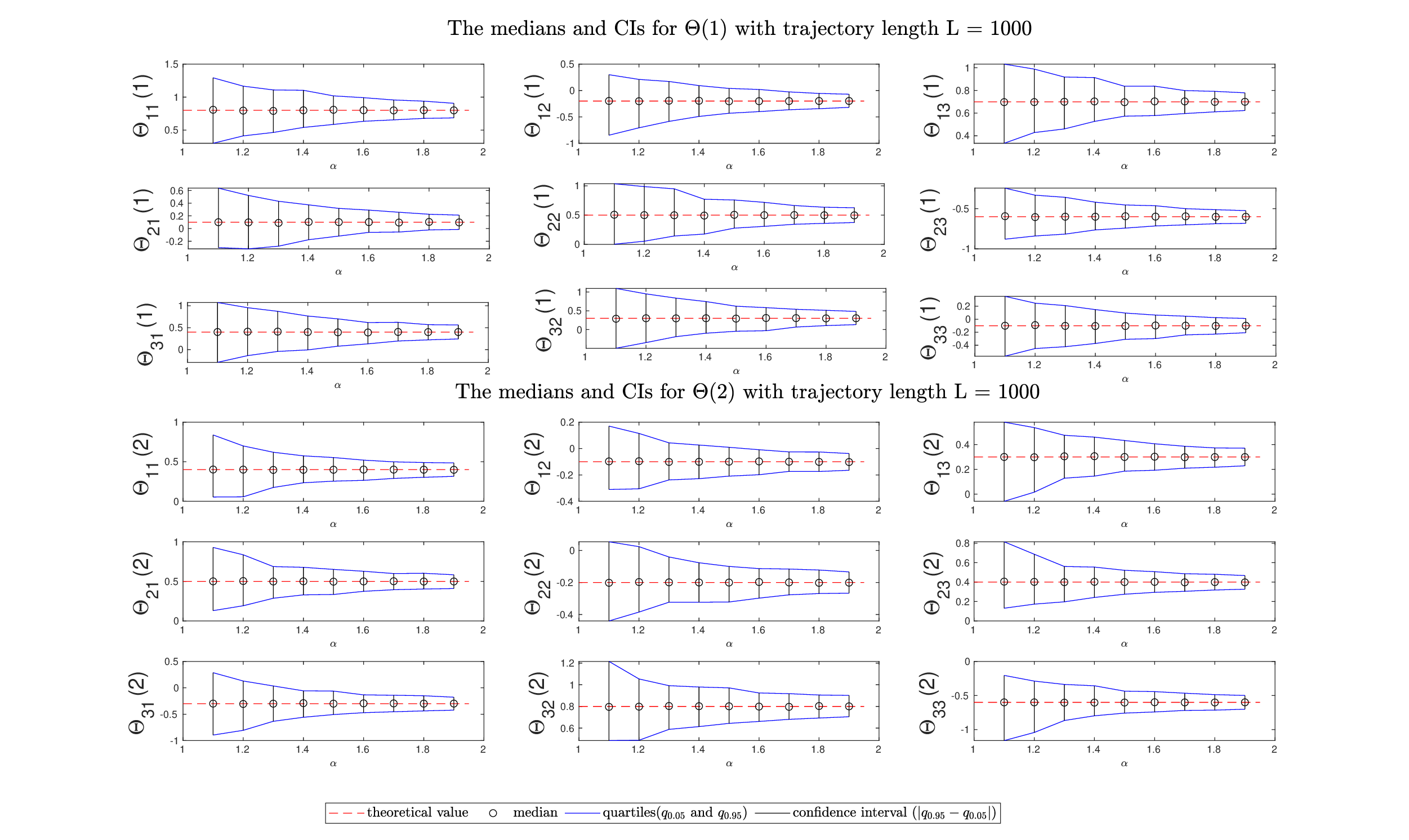}
  \caption{{The graph presenting the results (medians and 90\% empirical confidence intervals using empirical quantiles function) for the Y-W-CV estimation method applied to the three-dimensional $\alpha-$stable PAR(1) time series of length $L =1000$ with $T=2$ - Model 2. The coefficient matrices are given in Eq. (\ref{Theta_2}), the spectral measure is defined in Eq. (\ref{chosen_Gamma2}) and $\alpha$ is changing from $\alpha=1.1$ to $\alpha=1.9$. The results are obtained by using Monte-Carlo simulations with the number of repetitions equal to $M=1000$.}}
  \label{fig11:}
\end{figure}
\begin{figure}[h!]
  \centering
  \includegraphics[scale=0.26]{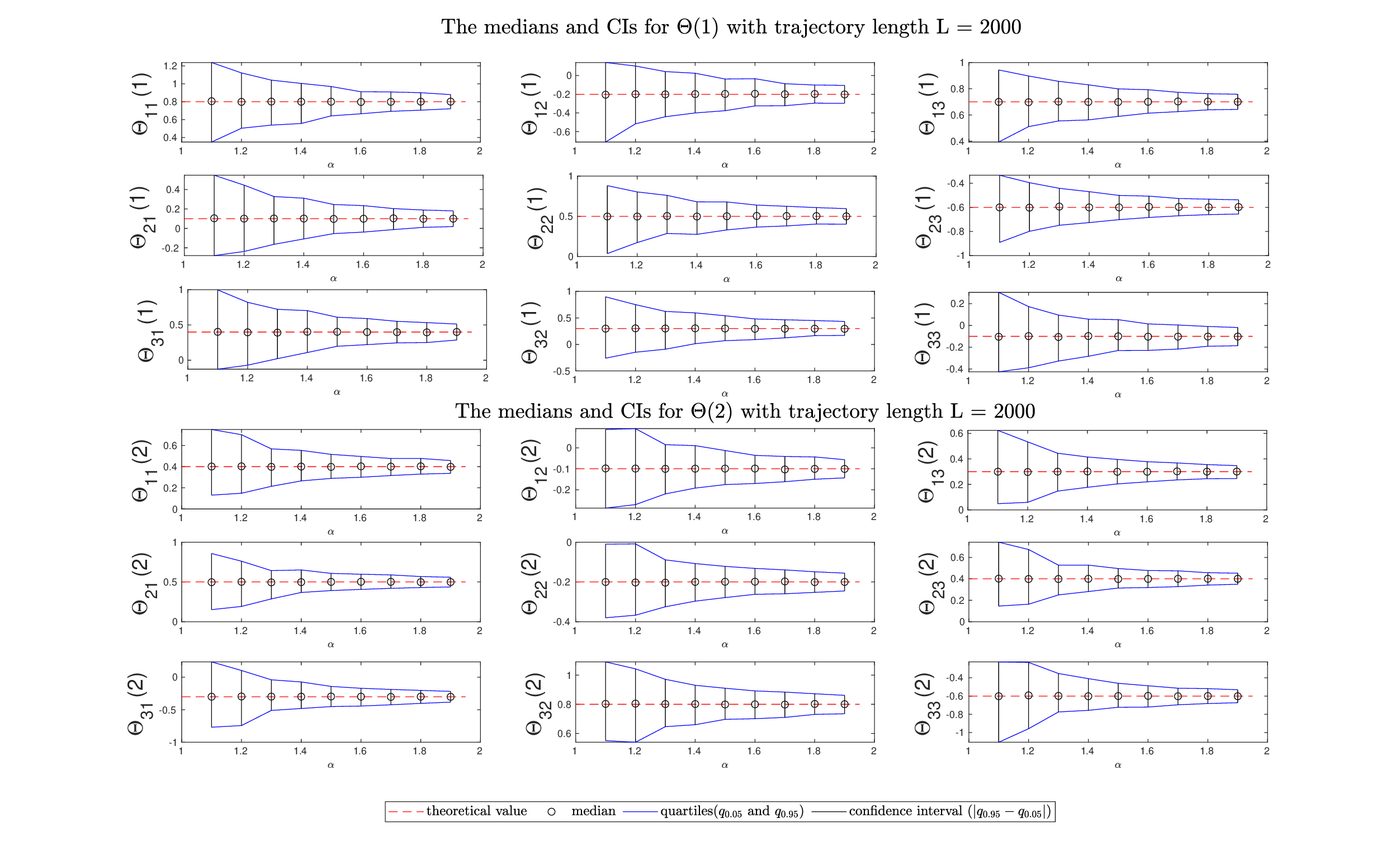}
      \caption{{The graph presenting the results (medians and 90\% empirical confidence intervals using empirical quantiles function) for the Y-W-CV estimation method applied to the three-dimensional $\alpha-$stable PAR(1) time series of length $L =2000$ with $T=2$ - Model 2. The coefficient matrices are given in Eq. (\ref{Theta_2}), the spectral measure is defined in Eq. (\ref{chosen_Gamma2}) and $\alpha$ is changing from $\alpha=1.1$ to $\alpha=1.9$. The results are obtained by using Monte-Carlo simulations with the number of repetitions equal to $M=1000$.}}
  \label{fig12:}
\end{figure}
For the considered time series models, we want to evaluate the accuracy of the proposed Y-W-CV estimation procedure for different values of the stability index and different trajectory lengths. Therefore, for both models we generate $M = 1000$ sample trajectories of the lengths $L$ with $\alpha$ changing from $\alpha = 1.1$ to $\alpha = 1.9$. Based on the simulated trajectories, we estimate the unknown coefficients of the model using the proposed Y-W-CV method. For the outcomes corresponding to each value of $\alpha$ we calculate the median and the quantiles of order $0.05$ and $0.95$ ($90\%$ empirical confidence intervals using empirical quantiles function). For Model 1 and Model 2, the results are presented in Figs. \ref{fig7:}-\ref{fig9:} and Figs. \ref{fig10:}-\ref{fig12:}, respectively, for the trajectories of length $L = 500$, $L=1000$ and $L=2000$. As one can observe, for both models the medians of the values taken by the estimated coefficients are close to theoretical ones and the length of the calculated empirical confidence intervals decreases if the values of $\alpha$ and $L$ increase. 

\section{Real data analysis} \label{realdata} 
In this section, the application of the considered model to the real data is presented. The analyzed dataset comes from the Nord Pool power day-ahead market for the Swedish  {SE2} area and contains two hourly time series -- prices \citep{data_prices} and volumes bought \citep{data_volumes} --  {from 31st May 2018 through 29th July 2018 \footnote{ {The data were available and downloaded in September 2020.}}}. In both trajectories, presented in Fig. \ref{fig:data_raw}, there are  {1440} observations. The data are taken from the energy market and thus, we expect the periodic behavior related to the $T=24$, as we are considering hourly data and thus the periodic model is a natural choice. On the other hand, it is obvious that the prices and volumes bought are related. Thus, the two-dimensional model is proposed. Finally, the data are expected to be non-Gaussian (large observations are visible in both datasets). Thus, the $\alpha-$stable distribution seems to be justified for the distribution of the residual vector. From each trajectory, the fitted first-degree polynomial and the periodic mean are subtracted. To the data without these components, presented in Fig. \ref{fig:data_dt}, the two-dimensional $\alpha$-stable PAR(1) model with period $T$ is now fitted, using the Y-W-CV method introduced in Section \ref{est_method}. The estimated values for all $2\cdot 2 \cdot 24 = 96$ coefficients are presented in Table \ref{tab:ests} in Appendix C.

\begin{figure}[h!]
  \centering
  \includegraphics[scale=0.37]{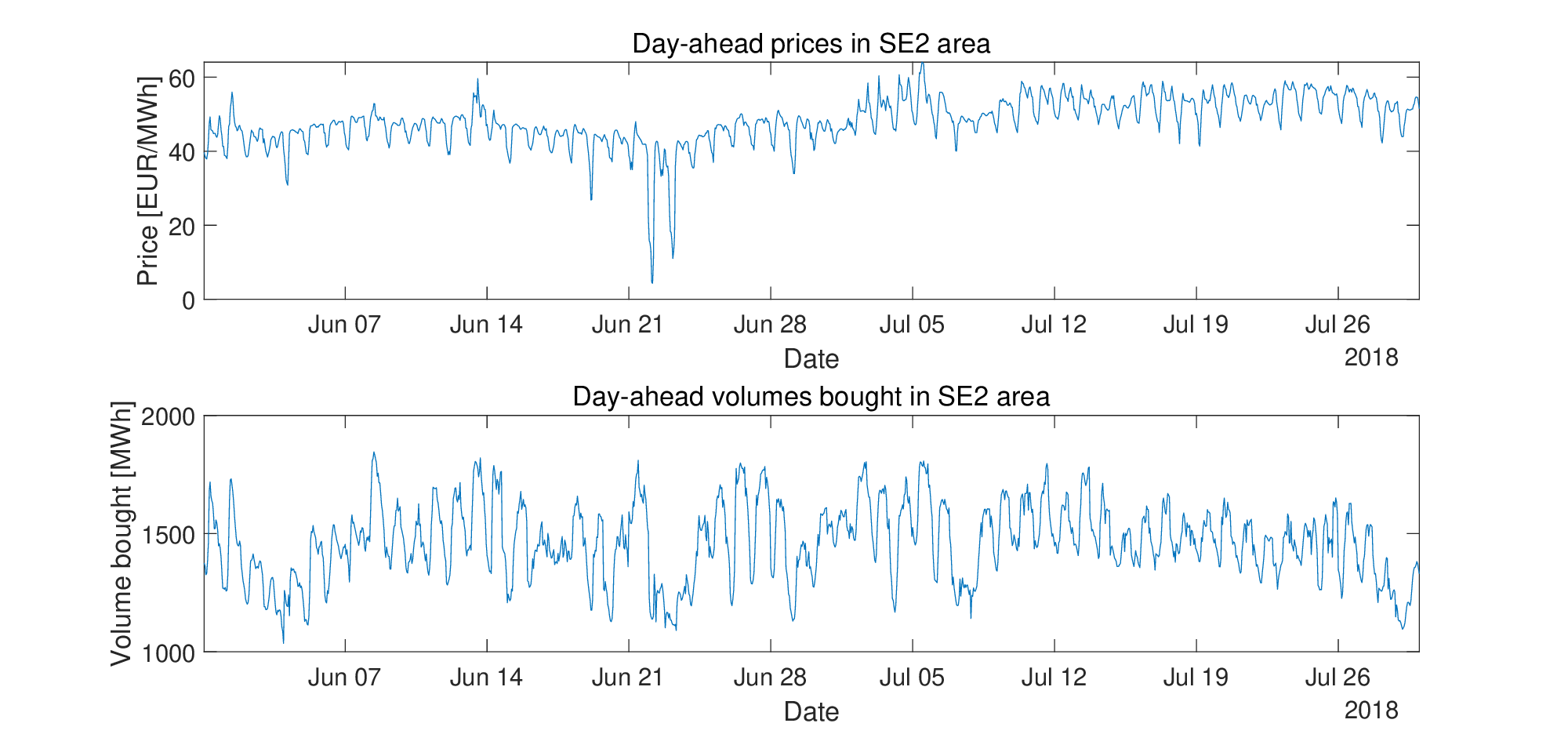}
  \caption{ {The analyzed dataset: hourly prices (top panel) and volumes bought (bottom panel) in Swedish SE2 area in the Nord Pool power day-ahead market from 31st May 2018 through 29th July 2018.}}
  \label{fig:data_raw}
\end{figure}

\begin{figure}[h!]
  \centering
  \includegraphics[scale=0.37]{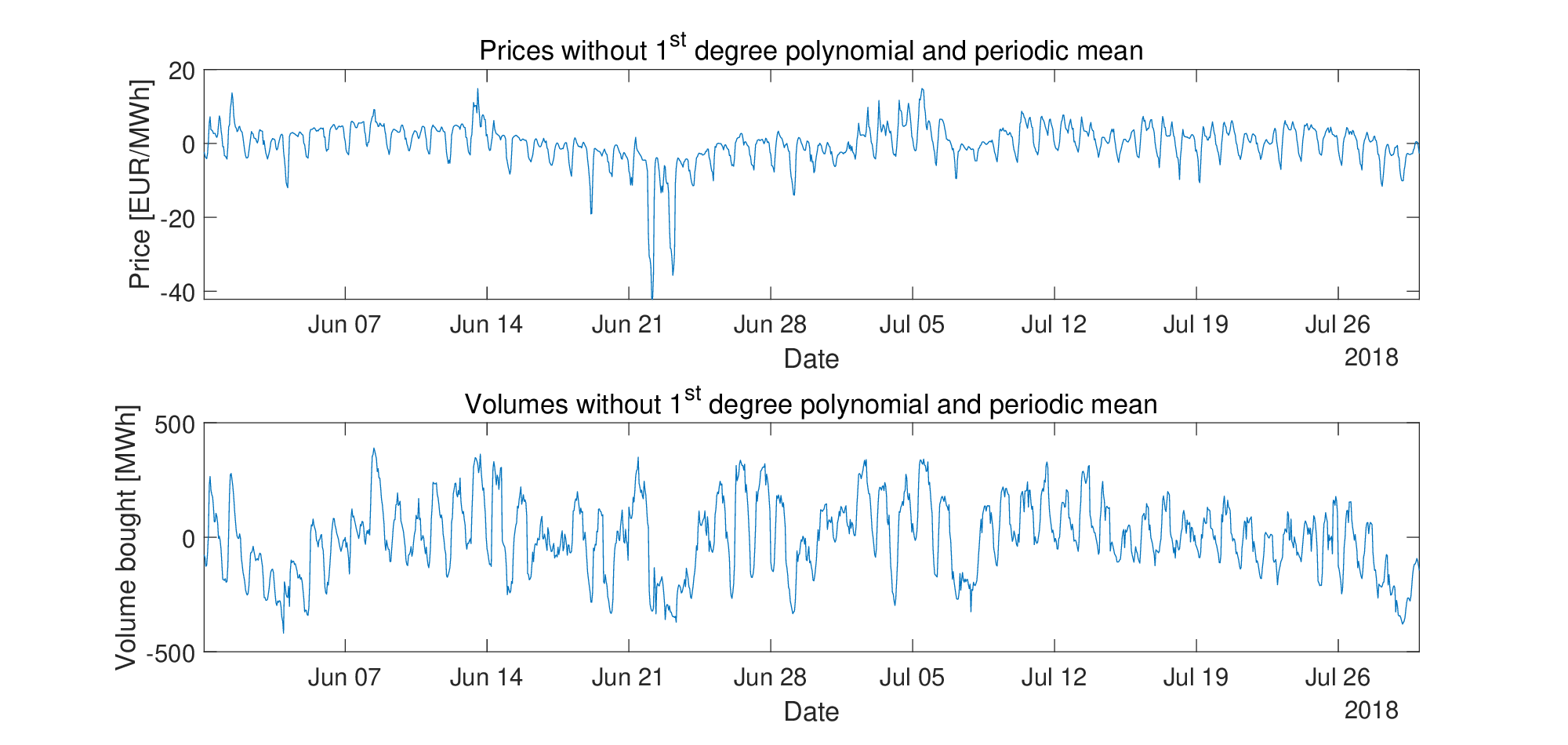}
  \caption{ {The analyzed prices (top panel) and volumes (bottom panel) series with the fitted first-degree polynomial and the periodic mean (with period $T=24$) subtracted.}}
  \label{fig:data_dt}
\end{figure}

The obtained residual series are illustrated in Fig. \ref{fig:data_residuals}. First, we check if each residual vector consists of independent observations. As we assume the residuals are $\alpha$-stable distributed, the auto-covariation measure is used. The top and middle panels of Figs. \ref{fig:data_resncv} present the estimated normalized auto-covariations for each residual vector. In both cases, there are no signs of a significant interdependence. However, one can see that the situation is not perfect. Especially for the price data the normalized auto-covariation takes no zero values for $h\neq 0$. The data are used here to demonstrate the possible applications of the presented methodology, and probably the proposed model is not the optimal one. However, it can be used as a preliminary description of the specific behavior visible in the data. Moreover, as one can see in the bottom panel of Fig. \ref{fig:data_resncv}, the estimated values of normalized cross-covariation indicate  {that there is no dependence between the residual vector components.} This is also confirmed by the estimated spectral measure presented in Fig. \ref{fig:spect2d} in Appendix D.

\begin{figure}[h!]
  \centering
  \includegraphics[scale=0.37]{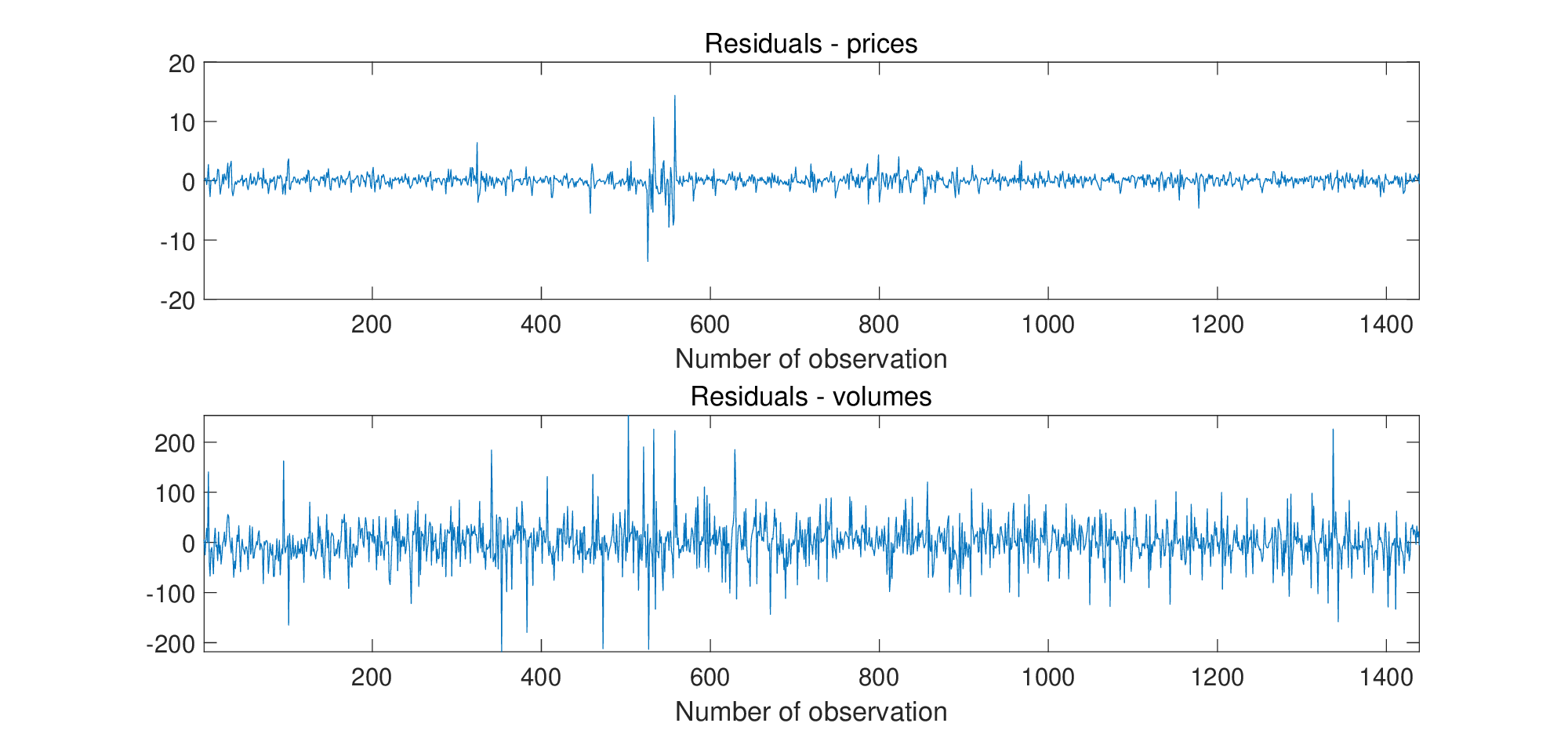}
  \caption{ {The obtained residuals for prices (top panel) and volumes (bottom panel) series.}}
  \label{fig:data_residuals}
\end{figure}

\begin{figure}[h!]
  \centering
  \includegraphics[scale=0.37]{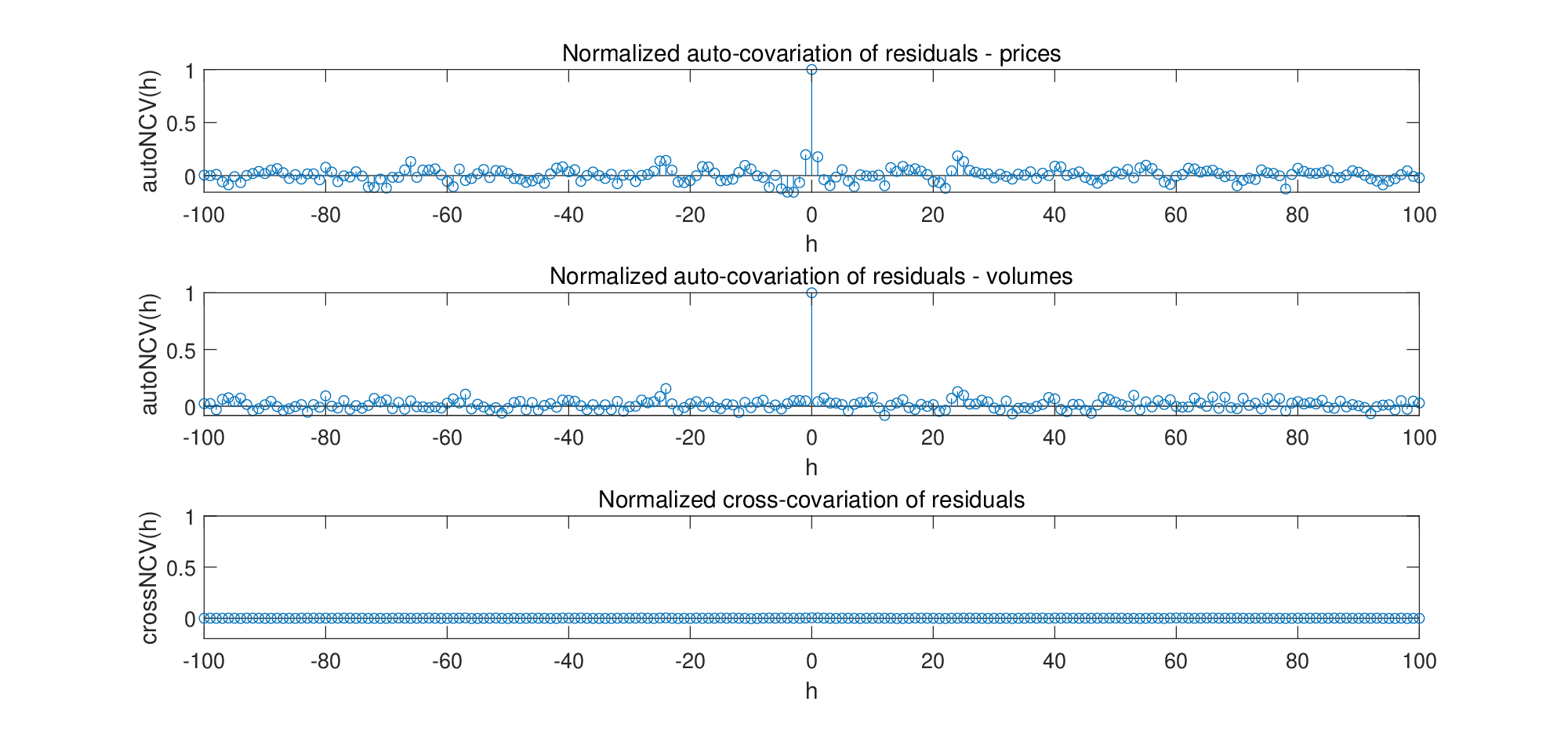}
  \caption{ {Top panel: normalized auto-covariation of prices' residual series, middle panel: normalized auto-covariation of volumes' residual series, bottom panel: normalized cross-covariation of residuals. }}
  \label{fig:data_resncv}
\end{figure}

Next, we analyze the distribution of the residuals and check whether they can be considered as a sample of a two-dimensional $S\alpha S$ vector. Separately, for each residual series, the parameters of the $\alpha$-stable distribution are estimated using the McCulloch method \citep{fin3}. To confirm that both vectors can be treated as $\alpha$-stable distributed samples, we perform the Monte Carlo-based Anderson-Darling (A-D) test with 1000 simulations \citep{rweron}. The results are presented in Table \ref{tab:data}. From the obtained empirical p-values, one can conclude that there is no reason to reject the hypothesis about the $\alpha$-stable distribution in both cases. Table \ref{tab:data} also contains the estimated values of the parameter $\alpha$ for both residual vector. One can see that they are close to each other hence one can claim the residuals constitute two-dimensional vector from $\alpha-$stable distribution. The estimated spectral measure corresponding to the residual vector (after the normalization by the estimated scale parameters) is presented in Fig. \ref{fig:spect2d} given in Appendix D. The estimated point masses and weight suggest that the components of the two-dimensional $\alpha$-stable vector can be  {considered as independent. Moreover, the estimated spectral measure indicates that the measure is symmetric.} 

\begin{table}[!h]
	\small\centering
	\caption{ {Empirical p-values of the Anderson-Darling test based on 1000 Monte Carlo simulations and estimated values of stability index $\alpha$ for residual series.}} 
	\vspace{5pt} \centering
\begin{tabular}{||c||c||c||}
		\hline 
		residuals &  p-value of A-D test & estimated value of $\alpha$  \\
		\hline
		\hline
		prices &  0.2060 & 1.4095    \\ 
		\hline
		volumes &  0.1720 & 1.4103 \\ 
		\hline

	\end{tabular}
\label{tab:data}		
\end{table}

Finally, we have generated the quantile lines, using the following procedure. The 5000 trajectories of the two-dimensional $\alpha$-stable PAR(1) model with period $T=24$ and estimated parameters are simulated. The residuals are simulated from the two-dimensional $\alpha-$stable distribution with the estimated spectral measure and $\alpha$ parameter equal to the mean of the estimated stability indices obtained for  components of the residual vector.  Using these trajectories, we construct the quantile lines, taking quantiles of order $q = 0.1,\,0.5,\,0.9$. To compare them with the original dataset, the previously subtracted deterministic components (first-degree polynomials and periodic means) are added to the simulated series. This comparison, illustrated in Fig. \ref{fig:data_qlines}, confirms that the considered model can be used for description of the  examined data. Moreover, in Fig. \ref{fig:qlines_1step} we present also the one-step ahead conditional quantiles constructed based on the fitted model. This plot also confirms that the model is appropriate for the considered real data.

\begin{figure}[h!]
  \centering
  \includegraphics[scale=0.37]{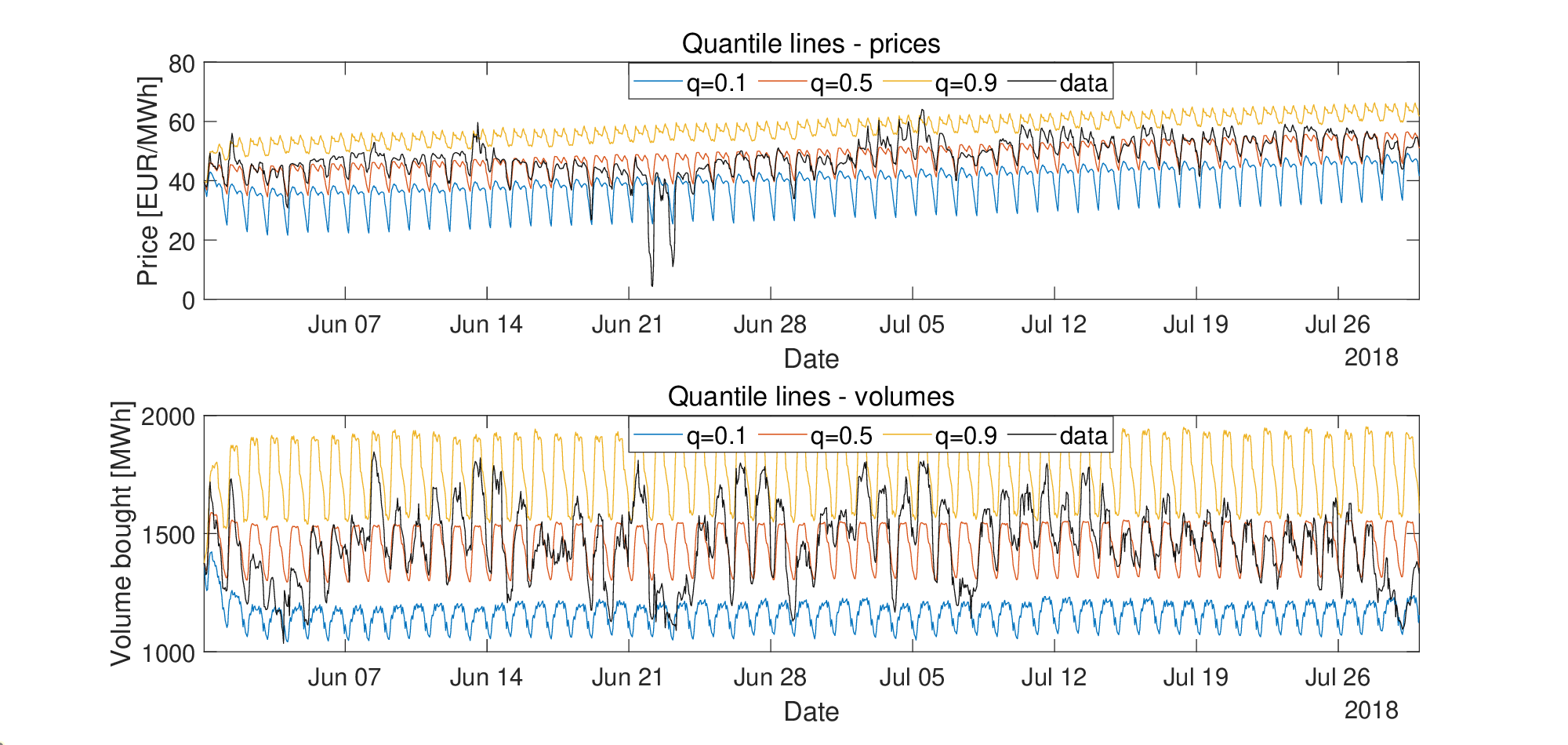}
  \caption{ {Quantile lines, based on the 5000 simulated trajectories of the two-dimensional $\alpha$-stable PAR(1) model with period $T=24$ and with estimated parameters (with added deterministic components) in comparison with the original dataset. }}
  \label{fig:data_qlines}
\end{figure}
\begin{figure}[h!]
    \centering
    \includegraphics[scale=0.37]{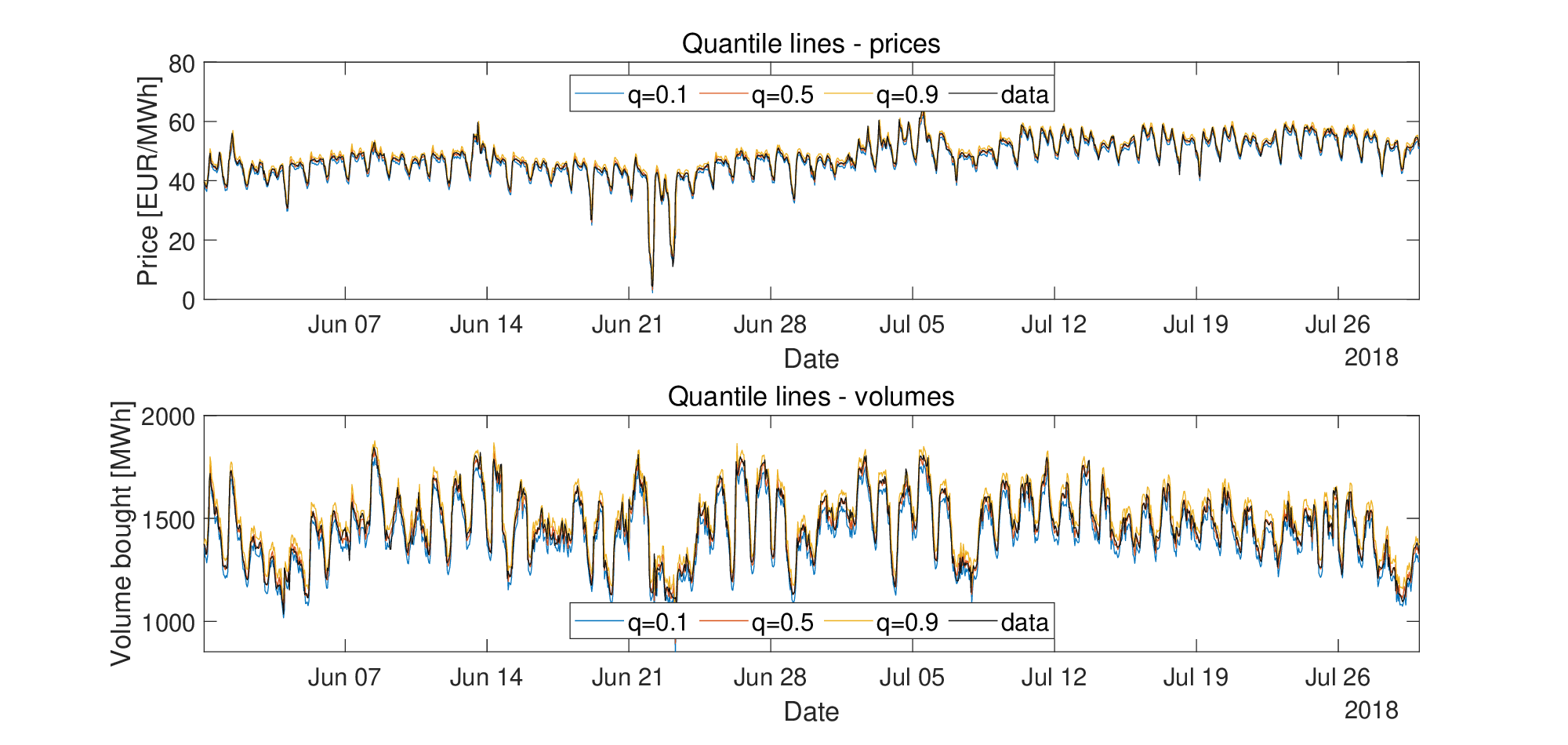}
    \caption{Quantile lines of one-step ahead predictions, based on the 5000 simulated trajectories of the two-dimensional $\alpha$-stable PAR(1) model with period $T=24$ and with estimated parameters (with added deterministic components) in comparison with the original dataset.}
    \label{fig:qlines_1step}
\end{figure}
\section{Conclusions and future study} \label{Conclusions}
In this paper, the multidimensional PAR model with infinite-variance distribution is considered. The special attention we have paid to new estimation methods dedicated to the model. These method are based on the generalized Yule-Walker equations where the classical auto- and cross-covariance functions are replaced by the auto- and cross-covariation functions, properly defined for the $\alpha-$stable models.  The presented simulation results indicate the validity of the proposed methodology. Moreover, we compare the Y-W-CV method with Y-W-T method. The presented simulation study indicates the Y-W-CV technique is more effective than Y-W-T. However, one can conclude that the dispersion of the new estimator is still a challenging task. Finally, the new approach was applied for the real-time series to show the possible applications of the introduced methodology. Although the fitted model is not perfect  for the analyzed data, the presented analysis demonstrates how to proceed in order to fit the multidimensional periodic model with non-Gaussian characteristics.

Although the new approach is introduced and its validity is examined in this paper, there are still a few important open questions. The first one is the proper identification of the model order. In the classical Gaussian case, there are known information criteria that can be useful here. Also in a one-dimensional infinite-variance case, several attempts have been made in this area  \citep{kruczek_physica}. However, according to our knowledge, there is still a space for the new algorithms of the proper model recognition for the multidimensional infinite-variance models.  The second issue is the identification of the period $T$.  In this paper for real data analysis, it is assumed that $T$ is known. However, for many real applications, this value is not given and needs to be estimated. In case when the data are Gaussian or finite-variance-distributed, there are known methods for period identification, like coherent or incoherent statistics based on the autocovariance function \citep{broszkiewicz2004detecting}. In the $\alpha-$stable case, the problem is much more complicated. One of the possible approaches is based on the replacement of the autocovariance function by the adequate dependency measure. The similar problem was discussed in \citep{kruczek2020detect} for vibration data. However this issue needs to be carefully investigated and thus, we decide to consider it as the future study.

The possible extension of the presented methodology is the introduction of the new estimation algorithms for the time-varying AR models when the assumption of Gaussian distribution is not valid. Thus, we see here a huge potential for further research and further application of the introduced methodology. 

\section*{Data Availability Statement}
Not applicable.

\section{Declarations}
\textbf{Funding} \\
The work of P. Giri and  S. Sundar  was supported by Indian Institute of Technology Madras, India under the Project No. SB20210848MAMHRD008558 "Centre for Computational Mathematics and Data Science, Department of Mathematics, IIT Madras Chennai India".
\\The work of  A. Wy{\l}oma{\'n}ska was supported by  the National Center of Science under Opus Grant 2020/37/B/HS4/00120 "Market risk model identification and validation using novel statistical, probabilistic, and machine learning tools".\\
The work of W. {\.Z}u{\l}awi{\'n}ski was supported by National Center of Science under Sheng2 project No.  UMO-2021/40/Q/ST8/00024 "NonGauMech - New methods of processing non-stationary signals (identification, segmentation, extraction, modeling) with non-Gaussian characteristics for the purpose of monitoring complex mechanical structures". \\
\noindent \textbf{Conflicts of interest/Competing interests} \\
The authors have no conflicts of interest to declare that are relevant to the content of this article. \\
\noindent \textbf{Availability of data and material} \\
 {The data were available and downloaded in September 2020.}\\
\noindent \textbf{Code availability} \\
Not applicable\\
\noindent \textbf{Authors' contributions} \\
Conceptualization: [ S. Sundar, A. Wylomanska, P. Giri, A. Grzesiek, W. \.Zu\l awi\'nski], Methodology: [A. Wylomanska,  P. Giri, A. Grzesiek], Formal analysis and investigation: [ P. Giri, A. Grzesiek, W. \.Zu\l awi\'nski],  Writing - original draft preparation: [P. Giri,A. Grzesiek, W. \.Zu\l awi\'nski, A. Wylomanska]; Writing - review and editing: [P. Giri,A. Grzesiek, A. Wylomanska], Supervision: [S. Sundar, A. Wylomanska]


\bibliographystyle{abbrvnat}
\bibliography{bib}
	
\appendix
	{\small \section{The proof of Lemma \ref{covariation_general}}
		 \noindent From \citep{Taqqu}, the cross-covariation given in Eq. (\ref{cross-CV}) can be calculated using the following formula  
			{\begin{equation*}
			\mathrm{CV}(X_r(s),X_l(t))=\frac{1}{\alpha}\frac{\sigma^\alpha(\theta_1,\theta_2)}{\partial\theta_1}\bigg\rvert_{\theta_1=0,\theta_2=1},
			\end{equation*}}
			where $\sigma^\alpha(\theta_1,\theta_2)$ denotes the scale parameter of $\theta_1X_r(s)+\theta_2X_l(t)$. Note that for $s\geq t$\\
			\begin{align*}
			\exp\{-\sigma^\alpha&(\theta_1,\theta_2)\}=\E\left[ \exp\left\{i\left(\theta_1X_r(s)+\theta_2X_l(t)\right)\right\}\right]\\[5pt]&=\exp\Bigg\{-\Bigg(\sum_{j=0}^{-(t-s)+1}\int_{S_m}|\theta_1g_{r1}(s,{s-j+1})s_1+\ldots+\theta_mg_{rm}(s,{s-j+1})s_m|^\alpha\Gamma(ds)\\&\hspace{1cm}+\sum_{j=-(t-s)}^{+\infty}\int_{S_m}|\left(\theta_1g_{r1}(s,{s-j+1})+\theta_2g_{l1}({t},{s-j+1})\right)s_1+\ldots\\&\hspace{1cm}+\big(\theta_1g_{rm}(s,{s-j+1})+\theta_2g_{lm}({t},{s-j+1})\big)s_m|^\alpha\Gamma(ds)\Bigg)\Bigg\}.
			\end{align*}\\
			And consequently 
		\begin{multline*} 
		\mathrm{CV}(X_r(s),X_l(t))=\sum_{j=0}^{+\infty}\int_{S_m}\left(g_{l1}({t},{t-j+1})s_1+g_{l2}({t},{t-j+1})s_2+\ldots+g_{lm}({t},{t-j+1})s_m\right)^{\langle\alpha-1\rangle}\\\left(g_{r1}({s},{t-j+1})s_1+g_{r2}({s},{t-j+1})s_2+\ldots+g_{rm}({s},{t-j+1})s_m\right)\Gamma(ds).
		\end{multline*}
		By proceeding similarly one can show that for $s\leq t$
		\begin{multline*}
		\mathrm{CV}(X_r(s),X_l(t))=\sum_{j=0}^{+\infty}\int_{S_m}\left(g_{l1}({t},{s-j+1})s_1+g_{l2}({t},{s-j+1})s_2+\ldots+g_{lm}({t},{s-j+1})s_m\right)^{\langle\alpha-1\rangle}\\\left(g_{r1}({s},{s-j+1})s_1+g_{r2}({s},{s-j+1})s_2+\ldots+g_{rm}({s},{s-j+1})s_m\right)\Gamma(ds).
		\end{multline*}
	\qed}

{\small \section{An Algorithms to solve system of equations}} \label{algo}
In this section, we present an algorithm (method) to solving the consistent system of equations (\ref{system_est}). Since the normalized-covariation-based matrix is positive semi-definite matrix, therefore it can be singular or non-singular matrix i.e $\widehat{\mathbf{NCV}}^{(v-1)}(0)$ for each $v = 1,\ldots,T$ can be singular or non-singular. If $\hat{NCV}^{(v-1)}(0)$ is singular and system of equations given in Eq. (29) are consistent. Then the solution $\hat{\Theta}(v)$ exists but it is not unique. So in this case, we need a numerical method for solving system of equations. There are many methods available in literature but in this paper we used the latest and very popular numerical method named by Bi-conjugate gradient stabilized method i.e BICGSTAB. This method can be use for non-singular case also.  In both cases, we can use the following method (algorithm) to solve the system of equations (\ref{system_est}). 

\begin{algorithm}[H]
\caption{\textbf{ The algorithm to solving system of equation (\ref{system_est})}}\label{alg:BICGSTAB}
\begin{algorithmic}[H]
\Procedure{$SOLUTION_1$}{$\widehat{\mathbf{NCV}}^{(v-1)}(0)$ ,   $\widehat{\mathbf{NCV}}^v(1)$} \Comment{where $\widehat{\mathbf{NCV}}^{(v-1)}(0)$ and $\widehat{\mathbf{NCV}}^v(1)$ are $m \times m$ matrices}
\State $A_v \gets \widehat{\mathbf{NCV}}^{(v-1)}(0)'$  \Comment{where $\widehat{\mathbf{NCV}}^{(v-1)}(0)'$ is transpose of $\widehat{\mathbf{NCV}}^{(v-1)}(0)$ }
\State $B_v \gets \widehat{\mathbf{NCV}}^v(1)'$   \Comment{where $\widehat{\mathbf{NCV}}^v(1)'$ is transpose of $\widehat{\mathbf{NCV}}^v(1)$ }
\State \textbf{for} $i= 1,2$
\State $\mathbf{\Theta_v}(:,i) \gets BICGSTAB(A_v,B_v(:,i),tol,maxit, M_{1},M_{2})$ \Comment{where $\mathbf{\Theta_v}(:,i)$ is $i^{th}$ column of $\mathbf{\Theta_v}$ and $M = M_1M_2 \sim  A_v$ is preconditioned.}
\State \textbf{end}
\State $\mathbf{\Theta}_v \gets \mathbf{\Theta}_v'$ \Comment{where $\mathbf{\Theta}_v = \mathbf{\Theta}(v)$} 
\State \textbf{return} $\mathbf{\Theta}(v)$
\EndProcedure
\end{algorithmic}
\end{algorithm}
where $BICGSTAB$ is a method (algorithm) with preconditioned to solve linear system of equations $\textbf{A x = b}$ (say), where $\textbf{A}$ is coefficient matrix whether \textbf{A} can be singular or non-singular, $\textbf{x}$ is vector of unknown and $\textbf{b}$ is known vector. It is latest and updated method. For more details of $BICGSTAB$ see \citep{g2,g4,g5,g6}.
{\small \section{Values of estimated parameters for the analyzed data}} \label{ests}

\begin{table}[!h]
	\small\centering
	\caption{ {Estimated values of the fitted model's parameters.}} 
	\vspace{5pt} \centering
\begin{tabular}{||c||c||c||c||c||}
		\hline 
		$v$ & $\widehat{\Theta}_{11}(v)$  & $\widehat{\Theta}_{12}(v)$ & $\widehat{\Theta}_{21}(v)$ & $\widehat{\Theta}_{22}(v)$  \\
		\hline
		\hline
		1 & 1.0226 & 0.0014 & 1.8911 & 0.9637   \\ 
		\hline
		2 & 1.0560 & 0.0005 & 0.6924 & 0.9514   \\ 
		\hline
		3 & 1.0896 & 0.0014 &  -0.1668 & 0.9449   \\ 
		\hline
		4 & 1.0284 & -0.0026 & -0.6131  & 0.9920   \\ 
		\hline
		5 &  0.7265 & 0.0054 &  -1.4507  & 0.9904    \\ 
		\hline
				6 & 0.7508  & 0.0001 & 4.3211  & 0.9230    \\ 
		\hline
				7 & 0.9214 & -0.0002 & 11.4943 &  0.8547   \\ 
		\hline
				8 & 1.0841  & 0.0009 & 6.7091  & 1.0086    \\ 
		\hline
				9 & 0.9445 & -0.0029 & -1.5941  & 0.9819    \\ 
		\hline
				10 & 0.9374 &  -0.0007 & 0.7084 &  0.9002   \\ 
		\hline
				11 & 1.0180  & -0.0007 & -4.2928 & 1.0619   \\ 
		\hline
				12 & 0.8873 & -0.00003 &  -0.7590  & 0.9210    \\ 
		\hline
				13 & 0.9757 & 0.0026 & 3.5708 & 0.9038   \\ 
		\hline
				14 & 0.9979 & -0.0004 & -1.6383  & 1.0344   \\ 
		\hline
				15 & 1.0264 &  -0.0017 & 1.0125 & 0.9504   \\ 
		\hline
				16 & 0.9862 & -0.0019 & -2.6817  &  0.9845   \\ 
		\hline
				17 & 0.9803 & 0.0010 & -3.9416  &  0.9087   \\ 
		\hline
				18 & 1.0461  & 0.0007 &  8.0590 & 0.7278   \\ 
		\hline
				19 & 0.9967 & -0.0004 &  2.2757 &   0.9442   \\ 
		\hline
				20 & 0.9141 & -0.0005 & 1.2990 & 0.8396    \\ 
		\hline
				21 & 0.9555 & -0.00003 & -2.3115 & 0.9922    \\ 
		\hline
				22 & 0.9654 & 0.0005 & 0.7507 & 0.9437    \\ 
		\hline
				23 & 1.0466 & -0.0015 &  0.8990  &   0.9548   \\ 
		\hline
				24 & 1.0265 & 0.0004 &  0.0823  &  0.7418    \\ 
		\hline

		\hline

	\end{tabular}
\label{tab:ests}		
\end{table}
\clearpage

{\small \section{Estimated spectral measure}} \label{spectral_meas}

\begin{figure}[h!]
    \centering
    \includegraphics[width=0.75\textwidth]{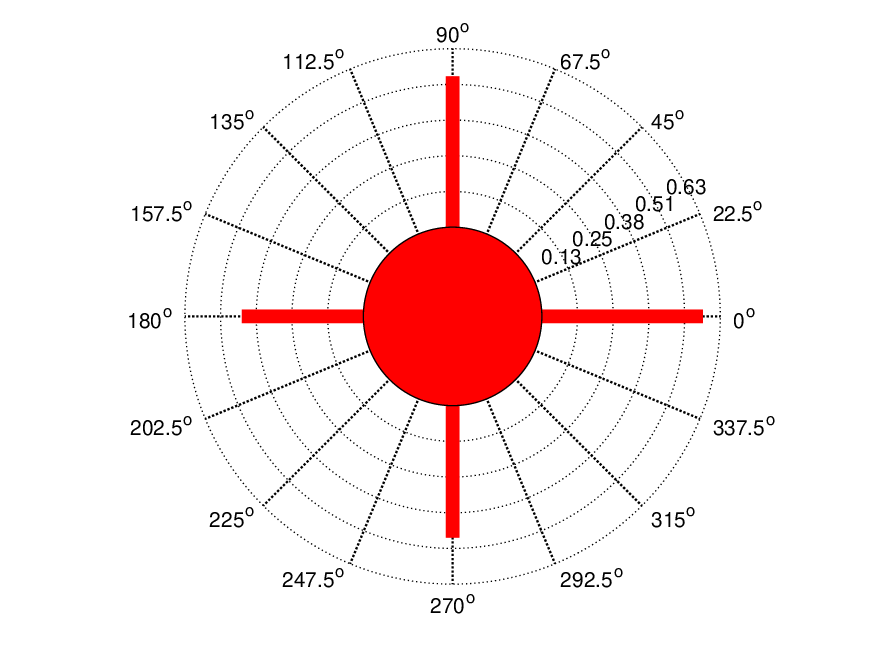}
    \caption{Estimated spectral measure for the residual series (with normalized components) of the fitted model.}
    \label{fig:spect2d}
\end{figure}

\end{document}